\DeclareRobustCommand\widecheck[1]{{\mathpalette\@widecheck{#1}}}
\def\@widecheck#1#2{%
	\setbox\z@\hbox{\m@th$#1#2$}%
	\setbox\tw@\hbox{\m@th$#1%
		\widehat{%
			\vrule\@width\z@\@height\ht\z@
			\vrule\@height\z@\@width\wd\z@}$}%
	\dp\tw@-\ht\z@
	\@tempdima\ht\z@ \advance\@tempdima2\ht\tw@ \divide\@tempdima\thr@@
	\setbox\tw@\hbox{%
		\raise\@tempdima\hbox{\scalebox{1}[-1]{\lower\@tempdima\box
				\tw@}}}%
	{\ooalign{\box\tw@ \cr \box\z@}}}
\newcommand*\rel@kern[1]{\kern#1\dimexpr\macc@kerna}
\newcommand*\widebar[1]{%
	\begingroup
	\def\mathaccent##1##2{%
		\rel@kern{0.8}%
		\overline{\rel@kern{-0.8}\macc@nucleus\rel@kern{0.2}}%
		\rel@kern{-0.2}%
	}%
	\macc@depth\@ne
	\let\math@bgroup\@empty \let\math@egroup\macc@set@skewchar
	\mathsurround\z@ \frozen@everymath{\mathgroup\macc@group\relax}%
	\macc@set@skewchar\relax
	\let\mathaccentV\macc@nested@a
	\macc@nested@a\relax111{#1}%
	\endgroup
}
\DeclareMathOperator*{\argmax}{arg\,max}
\DeclareMathOperator*{\argmin}{arg\,min}
\DeclareMathOperator*{\vect}{vec}
\DeclareMathOperator*{\invect}{invec}
\DeclareMathOperator*{\blkdiag}{blkdiag}
\newtheorem{lemma}{Lemma}
\newcommand\norm[1]{\lVert#1\rVert}
\newcommand\normF[1]{\lVert#1\rVert_{\mathrm{F}}}
\newcommand{\ut}{_{\mathrm{t}}}
\newcommand{\ur}{_{\mathrm{r}}}
\newcommand{\los}{_{\mathrm{LoS}}}
\newcommand{\nlos}{_{\mathrm{NLoS}}}
\newcommand{\nlosp}{_{\mathrm{NLoS}, \mathcal{P}}}
\newcommand{\nlospp}{_{\mathrm{NLoS}, \mathcal{P}'}}
\newcommand{\A}{\mathbf{A}}
\newcommand{\At}{A_{\mathrm{t}}}
\newcommand{\Ar}{A_{\mathrm{r}}}
\newcommand{\Ats}{A_{\mathrm{t}, \mathrm{s}}}
\newcommand{\Ars}{A_{\mathrm{r}, \mathrm{s}}}
\newcommand{\B}{\mathbf{B}}
\newcommand{\C}{\mathbf{C}}
\newcommand{\K}{\mathbf{K}}
\newcommand{\Nt}{N_{\mathrm{t}}}
\newcommand{\Nr}{N_{\mathrm{r}}}
\newcommand{\Nts}{N_{\mathrm{t}, \mathrm{s}}}
\newcommand{\Nrs}{N_{\mathrm{r}, \mathrm{s}}}
\newcommand{\Mt}{M_{\mathrm{t}}}
\newcommand{\Mr}{M_{\mathrm{r}}}
\newcommand{\Mts}{M_{\mathrm{t}, \mathrm{s}}}
\newcommand{\Mrs}{M_{\mathrm{r}, \mathrm{s}}}
\newcommand{\Kt}{K_{\mathrm{t}}}
\newcommand{\Kr}{K_{\mathrm{r}}}
\newcommand{\Stj}{\mathbf{S}_{\mathrm{t}, j}}
\newcommand{\Sri}{\mathbf{S}_{\mathrm{r}, i}}
\newcommand{\deltarm}{\delta_{\mathrm{r}, m}}
\newcommand{\deltatn}{\delta_{\mathrm{t}, n}}
\newcommand{\nuri}{\nu_{\mathrm{r}, i}}
\newcommand{\nutj}{\nu_{\mathrm{t}, j}}
\newcommand{\F}{\mathbf{F}}
\newcommand{\W}{\mathbf{W}}
\newcommand{\I}{\mathbf{I}}
\newcommand{\Y}{\mathbf{Y}}
\newcommand{\y}{\mathbf{y}}
\newcommand{\Q}{\mathcal{Q}}
\newcommand{\thetat}{\theta_{\mathrm{t}}}
\newcommand{\thetar}{\theta_{\mathrm{r}}}
\newcommand{\rhotl}{\rho_{\mathrm{t}, l}}
\newcommand{\rhorl}{\rho_{\mathrm{r}, l}}
\newcommand{\varrhotl}{\varrho_{\mathrm{t}, l}}
\newcommand{\varrhorl}{\varrho_{\mathrm{r}, l}}
\newcommand{\gammatij}{\gamma_{\mathrm{t}, i, j}}
\newcommand{\gammarij}{\gamma_{\mathrm{r}, i, j}}
\newcommand{\xiti}{\xi_{\mathrm{t}, i}}
\newcommand{\xirj}{\xi_{\mathrm{r}, j}}
\newcommand{\phir}{\phi_{\mathrm{r}}}
\newcommand{\at}{\mathbf{a}_{\mathrm{t}}}
\newcommand{\ar}{\mathbf{a}_{\mathrm{r}}}
\newcommand{\Dt}{\mathbf{D}_{\mathrm{t}}}
\newcommand{\Dr}{\mathbf{D}_{\mathrm{r}}}
\newcommand{\QDt}{Q_{\mathbf{D}_{\mathrm{t}}}}
\newcommand{\QDr}{Q_{\mathbf{D}_{\mathrm{r}}}}
\newcommand{\varphit}{\varphi_{\mathrm{t}}}
\newcommand{\varphir}{\varphi_{\mathrm{r}}}
\newcommand{\alphat}{\alpha_{\mathrm{t}}}
\newcommand{\alphar}{\alpha_{\mathrm{r}}}
\newcommand{\checkxiti}{\check{\xi}_{\mathrm{t}, i}}
\newcommand{\checkxirj}{\check{\xi}_{\mathrm{r}, j}}
\newcommand{\hatvarphit}{\hat{\varphi}_{\mathrm{t}}}
\newcommand{\hatvarphir}{\hat{\varphi}_{\mathrm{r}}}
\newcommand{\hateta}{\hat{\eta}}
\newcommand{\hatOmegar}{\widehat{\mathbf{\Omega}}_{\mathrm{r}}}
\newcommand{\hatOmegat}{\widehat{\mathbf{\Omega}}_{\mathrm{t}}}
\newcommand{\hatDr}{\widehat{\mathbf{D}}_{\mathrm{r}}}
\newcommand{\hatDt}{\widehat{\mathbf{D}}_{\mathrm{t}}}
\newcommand{\hatXit}{\hat{\boldsymbol{\xi}}_{\mathrm{t}}}
\newcommand{\hatXir}{\hat{\boldsymbol{\xi}}_{\mathrm{r}}}
\newcommand{\hatalphat}{\hat{\alpha}_{\mathrm{t}}}
\newcommand{\hatalphar}{\hat{\alpha}_{\mathrm{r}}}
\newcommand{\hatxiti}{\hat{\xi}_{\mathrm{t}, i}}
\newcommand{\hatxirj}{\hat{\xi}_{\mathrm{r}, j}}
\newcommand{\barthetat}{\bar{\theta}_{\mathrm{t}}}
\newcommand{\barthetar}{\bar{\theta}_{\mathrm{r}}}
\newcommand{\barphir}{\bar{\phi}_{\mathrm{r}}}
\newcommand{\barR}{\bar{R}}
\newcommand{\bareta}{\bar{\eta}}
\newcommand{\barXit}{\bar{\boldsymbol{\xi}}_{\mathrm{t}}}
\newcommand{\barXir}{\bar{\boldsymbol{\xi}}_{\mathrm{r}}}
\newcommand{\barxiti}{\bar{\xi}_{\mathrm{t}, i}}
\newcommand{\barxirj}{\bar{\xi}_{\mathrm{r}, j}}
\newcommand{\barvarphit}{\bar{\varphi}_{\mathrm{t}}}
\newcommand{\barvarphir}{\bar{\varphi}_{\mathrm{r}}}
\newcommand{\baralphat}{\bar{\alpha}_{\mathrm{t}}}
\newcommand{\baralphar}{\bar{\alpha}_{\mathrm{r}}}
\newcommand{\baratj}{\bar{\mathbf{a}}_{\mathrm{t}, j}}
\newcommand{\barari}{\bar{\mathbf{a}}_{\mathrm{r}, i}}
\newcommand{\tildeatj}{\tilde{\mathbf{a}}_{\mathrm{t}, j}}
\newcommand{\tildeari}{\tilde{\mathbf{a}}_{\mathrm{r}, i}}
\begin{document}
	
	\title{Low-Complexity On-Grid Channel Estimation for Partially-Connected Hybrid XL-MIMO}
	
	\author{
		\IEEEauthorblockN{Sunho Kim},~\IEEEmembership{Graduate Student Member,~IEEE},
		and
		\IEEEauthorblockN{Wan Choi},~\IEEEmembership{Fellow,~IEEE}
		\vspace{-0.1in}
		\thanks{S.~Kim and W.~Choi are with the Department of Electrical and Computer Engineering, and the institute of New Media and Communications and   Seoul National University
			(SNU), Seoul 08826, Korea (e-mail: \{sunhokim,~wanchoi\}@snu.ac.kr).  (\emph{Corresponding Author: Wan Choi})} 
	\vspace{-0.1in}
}

\markboth{ draft }%
{Shell \MakeLowercase{\textit{et al.}}: Low-Complexity Channel Estimation for Near-Field XL-MIMO}

\maketitle

\begin{abstract}
	This paper addresses the challenge of channel estimation in extremely large-scale multiple-input multiple-output (XL-MIMO) systems, pivotal for the advancement of 6G communications. 
	XL-MIMO systems, characterized by their vast antenna arrays, necessitate accurate channel state information (CSI) to leverage high spatial multiplexing and beamforming gains. 
	However, conventional channel estimation methods for near-field XL-MIMO encounter significant computational complexity due to the exceedingly high parameter quantization levels needed for estimating the parametric near-field channel.
	To address this, we propose a low-complexity two-stage on-grid channel estimation algorithm designed for near-field XL-MIMO systems.
	The first stage focuses on estimating the LoS channel component while treating the NLoS paths as interference. This estimation is accomplished through an alternating subarray-wise array gain maximization (ASAGM) approach based on the piecewise outer product model (SOPM).	In the second stage, we estimate the NLoS channel component by utilizing the sensing matrix refinement-based orthogonal matching pursuit (SMR-OMP) algorithm. This approach helps reduce the high computational complexity associated with large-dimensional joint sensing matrices. 
	Simulation results demonstrate the effectiveness of our proposed low-complexity method, showcasing its significant superiority over existing near-field XL-MIMO channel estimation techniques, particularly in intermediate and high SNR regimes, and in practical scenarios involving arbitrary array placements.

\end{abstract}

\begin{IEEEkeywords}
	 Extremely large-scale MIMO, near-field, channel estimation.
\end{IEEEkeywords}

\IEEEpeerreviewmaketitle

\section{Introduction}\label{Introduction}
Multiple-input multiple-output (MIMO) is a pivotal technology that significantly enhances the performance of 5G, offering substantial gains through high spatial multiplexing and beamforming \cite{andrews2014will}.
However, conventional MIMO configurations may fall short of meeting the stringent performance demands expected with the advent of 6G \cite{viswanathan2020communications}. To address these challenges, recent advancements have led to the development of extremely large-scale MIMO (XL-MIMO), which utilizes much larger antenna arrays compared to traditional MIMO systems, specifically designed to meet these evolving requirements \cite{wang2023extremely}. Moreover, the availability of abundant bandwidth at high-frequency ranges has accelerated the adoption of mmWave and sub-terahertz bands \cite{tataria20216g}. As a result, high-frequency XL-MIMO communication techniques are emerging as a critical enabler for 6G communications.
 
Due to the significant power consumption and high hardware costs, a fully digital architecture that assigns a dedicated radio frequency (RF) chain to each antenna is impractical for XL-MIMO systems. As a result, a hybrid MIMO architecture has become a popular choice for XL-MIMO, connecting a limited number of RF chains to a much larger array of antennas \cite{alkhateeb2014channel, bogale2016number}. Among hybrid MIMO configurations, the partially-connected architecture, where each RF chain is linked to only a subset of antennas, has garnered increasing attention for its reduced hardware complexity \cite{gao2016energy, majidzadeh2017hybrid}. Related research indicates that this architecture can achieve improved power efficiency with minimal performance trade-offs and, in some cases, even enhanced performance when considering factors such as power dissipation and nonlinear distortion \cite{du2018hybrid, song2019fully}. Thus, the partially-connected hybrid architecture is gaining recognition as a practical and efficient solution for implementing XL-MIMO systems.

To harness the spatial multiplexing and beamforming gains facilitated by MIMO architecture, the acquisition of precise channel state information (CSI) is essential.
Substantial research has focused on MIMO channel estimation, predominantly based on the far-field channel model \cite{lee2016channel, rodriguez2018frequency}.
However, as the antenna aperture area increases, the planar wavefront assumption inherent in the far-field channel model becomes invalid, making these conventional methods unsuitable for XL-MIMO. 
To address this challenge, recent advancements in channel estimation literature have introduced channel model based on spherical wave propagation, i.e., near-field channel. 

The spherical wave model can be divided into two types: the non-uniform spherical wave model (NUSWM) \cite{zhou2015spherical} and the uniform spherical wave model (USWM) \cite{le2019massive}. As the array dimension increases further, the NUSWM provides a more accurate representation by accounting for variations in the received signal power across the antenna elements. On the other hand, in the USWM, it is assumed that the received signal power is uniform across all antenna elements to simplify the channel model.
While the USWM is more analytically tractable than the NUSWM, its complex phase term still presents challenges in model analysis. To address this, Fresnel approximation with uniform power assumption \cite{goodman2005introduction} is widely employed for near-field MIMO systems \cite{bohagen2005construction, do2020reconfigurable, do2023parabolic, xi2023gridless}. 
Following the terminology in \cite{do2023parabolic}, we refer to the wavefront model derived from the Fresnel approximation as the parabolic wavefront model.

Recent studies on near-field XL-MIMO channel estimation can be classified into two categories based on antenna configurations: those where either the transmitter or receiver is equipped with a single antenna, referred to as multiple-input single-output (MISO), and those where both the transmitter and receiver are equipped with multiple antennas, referred to as MIMO.
For MISO systems, the Fraunhofer distance (FD) \cite{selvan2017fraunhofer} is commonly used as a criterion to distinguish between the far-field and the parabolic wavefront model-based near-field regions of the MISO channel. For example, in a MISO system operating at a carrier frequency of 60 GHz, where the transmitter is equipped with a uniform linear array (ULA) comprising 256 antennas with half-wavelength spacing, the FD is approximately 163 meters.
Consequently, the near-field region encompasses a significant portion of conventional cell coverage areas. This has spurred extensive research into near-field channel estimation methods based on the USWM or the parabolic wavefront model for MISO systems \cite{han2020channel, cui2022channel, zhang2023near, kang2024pilot} with the assumption that the communication distance is sufficiently large to disregard variations in received signal power.

The authors of \cite{han2020channel} proposed a channel estimation method for near-field MISO systems, utilizing a subarray-wise approach to mitigate the computational complexity associated with large-dimensional near-field array steering vectors. In \cite{cui2022channel}, a polar domain dictionary was developed for the sparse representation of the near-field channel, facilitating the use of a compressive sensing (CS) approach to efficiently estimate channel parameters and gains. Building on this, \cite{zhang2023near} introduced a joint dictionary learning and sparse recovery method based on a distance-parameterized angular domain dictionary, which addresses the dimensionality challenges of the polar domain dictionary introduced in \cite{cui2022channel}. 
Additionally, \cite{kang2024pilot} proposed a pilot signal design method aimed at minimizing the mutual coherence of the sensing matrix based on the polar domain dictionary and Bayesian matching pursuit for hybrid near and far-field channels. 

While numerous studies have focused on near-field MISO channel estimation, it is important to note that MISO configurations are unable to fully exploit the spatial multiplexing and beamforming advantages inherent in MIMO architecture. However, research on near-field MIMO channel estimation remains limited, primarily due to the high dimensionality and complexity of the channel model.
Specifically, while a non-line-of-sight (NLoS) channel can be accurately represented as the outer product of the existing near-field array steering vectors developed for MISO systems, this approach does not directly extend to LoS channels.
To address this, \cite{lu2023near} proposed a two-stage channel estimation method where the geometric parameters of the NUSWM-based LoS channel is estimated under the maximum likelihood criterion, followed by NLoS channel estimation using the CS technique. However, the channel model considered in \cite{lu2023near} is limited to scenarios where the transmit and receive antenna arrays lie in the same plane, which undermines its practicality. Furthermore, the estimation algorithm in \cite{lu2023near} suffers from prohibitively high computational complexity due to its excessively large parameter quantization level for the LoS channel and the dictionaries representing the near-field array steering vectors for the NLoS channel. 
On the other hand, \cite{shi2024double} proposed a unified LoS and NLoS sparse channel representation, employing parabolic wavefront model and polar domain dictionaries for CS-based estimation. However, this approach faces challenges due to the excessively high dimensionality of the polar domain dictionary in XL-MIMO systems. Additionally, the inclusion of extra parameters, driven by the intricate nature of the LoS channel model, further increases the dimension of the dictionary.
In contrast, \cite{tarboush2024cross} proposed a low-complexity subarray-wise channel estimation method based on the CS technique for partially-connected MIMO architecture. However, their approach does not account for the relationships between the channel submatrices corresponding to different subarrays, resulting in the compromised estimation accuracy.

To develop an accurate yet low-complexity channel estimation method for practical near-field partially-connected hybrid XL-MIMO, we introduce the subarray-wise outer product model (SOPM), which enables an accurate outer product representation of the channel matrix between each transmit and receive subarray. We first analyze the validity range of the SOPM—referred to as the subarray-wise outer product distance (SOPD)—by examining the phase error of the SOPM compared to the USWM. Building on this foundation, we propose a two-stage on-grid channel estimation algorithm. Specifically, to leverage the distinct characteristics of the LoS and NLoS channel models, we first estimate the LoS channel while treating the NLoS channel as interference. Afterward, we estimate the NLoS channel by removing the influence of the previously estimated LoS component. The contributions of this paper are summarized as follows.
\begin{itemize}
	\item  We introduce a new SOPM to approximate the LoS channel matrix. In this model, the complicated phase terms of the parabolic wavefront mode, 
which depend on both the transmit and receive antenna indices, are approximated with reasonable accuracy as a sum of terms that depend on either the transmit or receive antenna index separately. This simplification allows us to represent the channel submatrices as the well-studied outer product of array steering vectors, enabling the application of simple signal processing methods. Furthermore, we propose the subarray-wise outer product distance (SOPD) to identify the regions where the LoS channel can be accurately modeled using the SOPM.	
	\item   Building on the structured nature of SOPM,  we develop a low-complexity LoS channel estimation algorithm tailored for near-field partially-connected XL-MIMO systems. Rather than relying on computationally prohibitive high-dimensional parameter searches, we introduce the alternating subarray-wise array gain maximization (ASAGM) framework. This  method decomposes the challenging estimation task into multiple low-dimensional subproblems based on SOPM, enabling an efficient iterative approach that maximizes the correlation between received pilot signals and array steering vectors. This reduction in parameter dimensionality not only alleviates computational complexity but also enables the use of higher parameter quantization levels, resulting in enhanced estimation performance.  
	\item  We propose a sensing matrix refinement-based orthogonal matching pursuit (SMR-OMP) algorithm for estimating the NLoS channel with dramatically reduces dictionary quantization complexity.  After isolating and removing the LoS component using SOPM, we leverage a two-step refinement strategy: (i) using an enhanced simultaneous OMP (SOMP) algorithm to coarsely detect support sets for transmit and receive dictionaries separately and (ii) constructing a refined joint dictionary based on these detected supports. This approach mitigates the computational burden of high-dimensional joint dictionary searches, significantly improving both efficiency and accuracy.	
	\item Extensive simulations confirm that our proposed framework, underpinned by the novel SOPM channel model, substantially outperforms existing near-field XL-MIMO channel estimation techniques \cite{lu2023near, shi2024double} and the polar-domain dictionary-based OMP method \cite{cui2022channel}. Our method achieves superior estimation accuracy in intermediate and high SNR regimes, while maintaining drastically lower computational complexity. These gains are particularly pronounced in real-world deployments with arbitrary array placements.
\end{itemize}

The remainder of this paper is organized as follows. 
Section \ref{System and Channel Models} introduces the preliminaries, including the system model and the LoS channel models based on the spherical wavefront, the parabolic wavefront, and the SOPM, as well as the NLoS channel model.
Section \ref{Proposed Two-stage Channel Estimation} details the proposed two-stage channel estimation method, which comprises ASAGM for LoS channel estimation and SMR-OMP for NLoS channel estimation. 
Section \ref{Simulation Results} provides simulation results, followed by the conclusion in Section \ref{Conclusions}.

\textit{Notations}:
Boldface uppercase, boldface lowercase, and normal face lowercase letters represent matrices, vectors, and scalars, respectively.
The transpose, conjugate, and conjugate transpose of $\mathbf{X}$ are represented by $\mathbf{X}^T$, $\mathbf{X}^*$, $\mathbf{X}^H$, respectively.
$\mathbf{X}^{\dagger} = (\mathbf{X}^H \mathbf{X})^{-1}\mathbf{X}^H$ denotes the pseudo inverse of matrix with linearly independent columns.
The Euclidean norm of $\mathbf{x}$ is given by $\norm{\mathbf{x}}$, and  the Frobenius norm of $\mathbf{X}$ is $\normF{\mathbf{X}}$.
The vectorization of a matrix $\mathbf{X}$ is denoted as $\vect(\mathbf{X})$, while $\invect(\mathbf{x})$ indicates the inverse vectorization process.
A circularly symmetric complex Gaussian random variable with mean $\mu$ and variance $\sigma^2$ is represented as $\mathcal{N}(\mu, \sigma^2)$. The Hadamard product and the Kronecker product of $\mathbf{X}$ and $\mathbf{Y}$ are denoted by $\mathbf{X} \odot \mathbf{Y}$ and $\mathbf{X} \otimes \mathbf{Y}$, respectively. 
The $N \times N$ identity matrix and the $N \times M$ zero matrix are denoted by $\mathbf{I}_{N}$ and $\mathbf{0}_{N \times M}$, respectively. 
The $i$th column and the $(i,j)$th entry of $\mathbf{X}$ are represented as $[\mathbf{X}]_{:,i}$ and $[\mathbf{X}]_{i,j}$,  respectively. 
$\mathrm{blkdiag}(\mathbf{X}_{1}, \dots, \mathbf{X}_{N})$ returns a block diagonal matrix with $\mathbf{X}_{1}, \dots, \mathbf{X}_{N}$ on the main diagonal.

\section{System and Channel Models}\label{System and Channel Models}
In this section, we elaborate on the signal model of the near-field XL-MIMO system and the LoS channel model based on the non-uniform spherical wave model (NUSWM), the uniform spherical wave model (USWM), and the parabolic wavefront model, respectively.
Next, we propose the subarray-wise outer product model (SOPM) for the LoS channel, and review the existing NLoS channel model.

\subsection{System Model}\label{System Model}
We consider a single-carrier partially-connected hybrid MIMO architecture with $\Nr$ receive and $\Nt$ transmit antennas as shown in Fig. \ref{fig:par}.
\begin{figure}[t]
	\centering
	\includegraphics[width = 2.5in]{./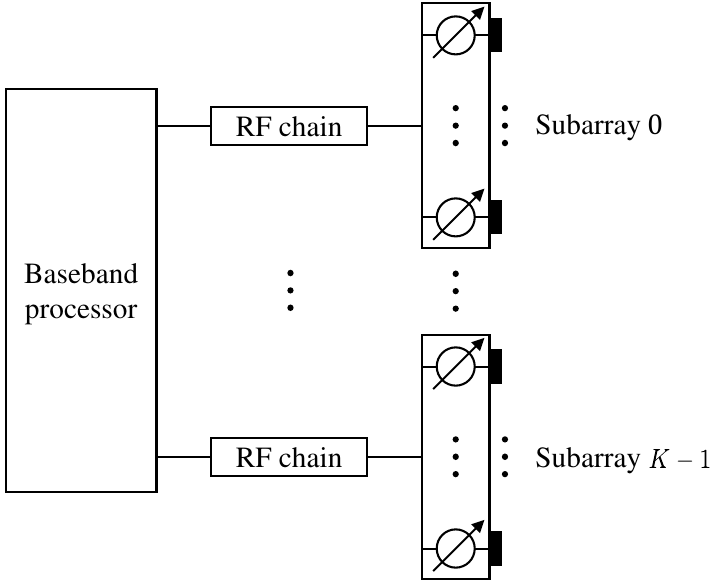}
	\caption{An illustration of partially-connected hybrid architecture.}
	\label{fig:par}
	\vspace{-.2in}
\end{figure}
The receiver and the transmitter are equipped with uniform linear array (ULA) with $ \Kr $ and $ \Kt $ RF chains, respectively, where each receive and transmit RF chain is connected to a disjoint subarray of $ \Nrs = \frac{\Nr}{\Kr}$ receive and $ \Nts = \frac{\Nt}{\Kt} $ transmit antennas, respectively. 
The antenna spacing is $ d = \lambda/2 $, where $ \lambda $ represents the carrier wavelength. 
We represent the channel between  receiver and  transmitter as $ \mathbf{H} \in \mathbb{C}^{\Nr \times \Nt} $. 
Using $ \Mr $ receive and $ \Mt $ transmit beams, the received signal $ \mathbf{Y} \in \mathbb{C}^{ \Mr \times \Mt} $ is written as
\begin{align}
	\mathbf{Y} = \W^{H} \mathbf{H} \F + \W^{H} \mathbf{N},
\end{align}
where $ \W \in \mathbb{C}^{ \Nr \times \Mr } $, $ \F \in \mathbb{C}^{ \Nt \times \Mt } $, and $ \mathbf{N} \in \mathbb{C}^{ \Nr \times \Mt} $ represent the analog combining and precoding matrices, and additive white Gaussian noise whose elements are independent and follow $ \mathcal{CN}(0, \sigma_{w}^2 ) $, respectively.
Representing the number of receive and transmit beams for each subarray as $\Mrs = \frac{\Mr}{\Kr} $ and $\Mts = \frac{\Mt}{\Kt} $, the partially-connected analog combining and precoding matrices are represented as
\begin{align}
	\W & = \blkdiag(\widetilde{\W}_{0}, \dots, \widetilde{\W}_{\Kr-1}) \in \mathbb{C}^{\Nr \times \Mr}, \\
	\F & = \blkdiag(\widetilde{\F}_{0}, \dots, \widetilde{\F}_{\Kt-1}) \in \mathbb{C}^{\Nt \times \Mt},
\end{align}
where $\widetilde{\W}_{i} \in \mathbb{C}^{ \Nrs \times \Mrs } $ and  $\widetilde{\F}_{j} \in \mathbb{C}^{ \Nts \times \Mts } $ denote the combining and precoding matrices of the $i$th transmit and the $j$th receive subarrays, for $ i = 1, \dots, \Kr $, $ j = 1, \dots, \Kt $.
Since the analog combiner and precoder are implemented via phase shifters, their matrices satisfy the constant modulus constraint $ | [\widetilde{\W}_{i}]_{m, n} | = \frac{1}{\Nrs} $ and $ | [\widetilde{\F}_{j}]_{m,n} | = \frac{1}{\Nts} $ for $m = 1, \dots, \Nr$ and $n = 1, \dots, \Nt$.

\subsection{LoS Channel Model} \label{LoS Channel Model}
\subsubsection{Spherical Wave Model}
To accurately reflect the spherical wavefront with non-uniform power received by each antenna element, the NUSWM-based LoS channel between the $ m $th receive and the $ n $th transmit antenna is expressed as 
\begin{align}\label{HNUSWM}
	[\textbf{H}^{\textrm{NUSWM}}\los]_{m,n} & = \frac{\check{g}}{r_{m, n}} e^{-j \frac{2\pi}{\lambda} r_{m, n}},
\end{align}
for $m = 1, \dots, \Nr$, $ n = 1, \dots, \Nt$,
where $ \check{g} $ denotes the small scale channel gain and $ r_{m, n} $ is the distance between the $ m $th receive and the $ n $th transmit antenna. This distance is represented as 
\begin{align}\label{rNUSWM}
	r_{m, n} & \!=\! 
	\Bigl[ \! ( R + \deltarm \!\sin{\thetar}\! \cos{\phir} - \deltatn \sin{\thetat} )^2 \nonumber \\[-5pt]
	& \quad  + ( \deltarm \!\sin{\thetar} \!\sin{\phir} )^2 + ( \deltarm \!\cos{\thetar} - \deltatn \!\cos{\thetat} )^2 \! \Bigr]^{\!1/2}\!,
\end{align}
where $R$ is the distance between the reference receive and transmit antennas,
$ \thetar $  and $ \thetat $ are the elevation angles of the receive and transmit antenna arrays, respectively, and $\phir$ is the azimuth angle of the receive antenna array,  as illustrated in Fig.\ref{fig:ULAs}.
\begin{figure}[t!]
	\centering
	\includegraphics[width = 3.2in]{./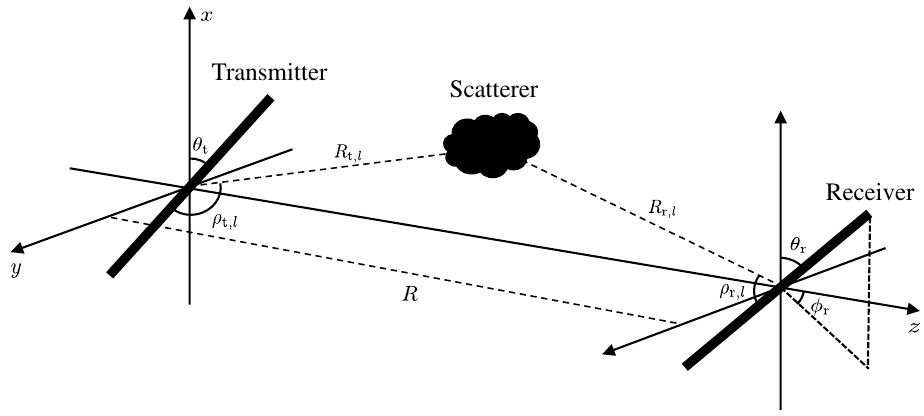}
	\caption{Illustration of transmit and receive ULAs and its coordinate system.}
	\label{fig:ULAs}
	\vspace{-.2in}
\end{figure}
Here, $ \deltarm = (m - \frac{\Nr+1}{2})d $ and $ \deltatn = (n - \frac{\Nt+1}{2})d $.
Note that for arbitrary placement of the transmit and receive arrays, a coordinate system can be constructed in which the azimuth angle of the transmit array is set to zero. Consequently, their relative positions are completely characterized by the three aforementioned angles.

To evaluate the effectiveness of the USWM, \cite{lu2021communicating} proposed a distance criterion termed uniform power distance (UPD) to identify communication distances at which the received signal power does not vary significantly across antenna elements. 
The UPD can be evaluated numerically and is typically small for high-frequency XL-MIMO systems. In this paper, we consider a XL-MIMO system where both the transmitter and receiver are equipped with ULAs of $ \Nr = \Nt = 128 $ at $f_c = 60$ with half-wavelength spacing, to assess the validity of the channel models.
For instance, in the considered XL-MIMO system, the upper bound of the UPD is approximately 12 meters when the threshold for the power ratio is set to $\Gamma_{\textrm{th}} = 0.9$.
Consequently, the NUSWM becomes less relevant in such scenarios.
Hence, for simplicity and practicality in channel modeling, we consider the USWM defined as
\begin{align}\label{HUSWM}
	[\textbf{H}^{\textrm{USWM}}\los]_{m,n}
	& = \frac{\check{g}}{R} e^{-j \frac{2\pi}{\lambda} r_{m, n}},
\end{align}
for $m = 1, \dots, \Nr$, $ n = 1, \dots, \Nt$,
where the received power across antenna element is assumed to be identical.

\subsubsection{Parabolic Wavefront Model}
In the parabolic wavefront model, \( r_{m, n} \) in (\ref{rNUSWM}) is approximated to \( r_{m, n}^{\textrm{parabolic}} \) based on Fresnel approximation as
\begin{align}\label{rparabolic}
r_{m, n}^{\textrm{parabolic}}
& \!=\! R + \deltarm \! \sin{\thetar} \!\cos{\phir} - \deltatn \!\sin{\thetat} \nonumber \\
& \quad + \!\frac{( \deltarm \!\sin{\thetar} \!\sin{\phir} )^2 + ( \deltarm \!\cos{\thetar} - \deltatn \!\cos{\thetat} )^2}{2R}\!,
\end{align}
using a Taylor expansion \( \sqrt{1+x} \approx 1 + \frac{x}{2} \), and neglecting terms much smaller than \( R \) in the denominator. For notational simplicity, we rewrite (\ref{rparabolic}) in terms of transformed parameters ${ \{ \varphir, \alphar, \varphit, \alphat, \eta \} }$ as
\begin{align}\label{rparabolic2}
r_{m, n}^{\textrm{parabolic}}
& \!=\! R + \deltarm^2 \alphar \!-\! \deltatn^2 \alphat \!+\! \deltarm \varphir \!-\! \deltatn \varphit \!-\! \eta \deltarm \deltatn,
\end{align}
where $ \varphir = \sin{\thetar} \cos{\phir} $,  $\varphit = \sin{\thetat} $, 
$ \alphar = \frac{ 1 - \varphir^2 }{2R} $, 
$ \alphat = - \frac{ 1 - \varphit^2 }{2R} $, and 
$ \eta = \frac{\cos{\thetar} \cos{\thetat} }{R} $. 
The validity of the parabolic wavefront model can be verified using the results from \cite{do2023parabolic}. 
To define the communication distance within which the modeling error of the parabolic wavefront model becomes significant, we define the distance at which the largest phase discrepancy between the USWM and the parabolic wavefront model exceeds $ \pi/8 $ as parabolic distance.
This approach is analogous to the definition of the Fraunhofer distance (FD) \cite{selvan2017fraunhofer}.
Fortunately, this distance is typically small, for example, this distance is approximately 4.5 meters in the considered XL-MIMO system. 
Thus, the parabolic wavefront model can be effectively employed in the considered scenario.

To represent the parabolic wavefront model-based LoS channel in matrix form, we define vectors of lengths $\Nr$ and $\Nt$ as follows:
\begin{align}
	[\ar (\varphi, \alpha)]_{m} & = e^{-j \frac{2\pi}{\lambda} ( \deltarm \varphi + \deltarm^2 \alpha )}, \label{steering vector r} \\[-3pt]
	[\at (\varphi, \alpha)]_{n} & = e^{-j \frac{2\pi}{\lambda} ( \deltatn \varphi + \deltatn^2 \alpha )}, \label{steering vector t}
\end{align}
for $m = 1, \dots, \Nr$, $ n = 1, \dots, \Nt$.
Then, the LoS channel matrix based on the parabolic wavefront model can be expressed as
\begin{align}\label{Hparabolic}
	\mathbf{H}\los^{\textrm{parabolic}} = 
	g
	\ar ( \varphir, \alphar ) 
	\at ( \varphit, \alphat )^H
	\odot \boldsymbol{\Lambda}(\eta),
\end{align}
where $g = \frac{\check{g}}{R} e^{-j \frac{2\pi}{\lambda} R } $ and $ \boldsymbol{\Lambda}(\eta) $ represents a $ \Nr \times \Nt $ Vandermonde matrix whose $ (m, n) $th entry is defined as
\begin{align}\label{HNLoS}
	& [\boldsymbol{\Lambda}(\eta)]_{m,n} =  e^{ j \frac{2\pi}{\lambda} \eta \deltarm \deltatn },
\end{align}
for $m = 1, \dots, \Nr$, $ n = 1, \dots, \Nt$.
To evaluate the system performance with respect to the SNR, we represent the squared magnitude of the LoS path gain as $ \lvert g \rvert^2 = \kappa/(1 + \kappa) $, where $\kappa$ is defined as the factor representing the power ratio of the LoS and the NLoS paths.

The phase terms in the parabolic wavefront model, expressed up to the second order using the Fresnel approximation, enable an analytical representation of channel eigenvalues and capacity  \cite{bohagen2005construction, do2020reconfigurable}. However, from the perspective of channel estimation, the presence of the complex coupled phase term $- \eta \deltarm \deltatn$ in (\ref{rparabolic}), or equivalently $\boldsymbol{\Lambda}(\eta)$ in (\ref{Hparabolic}), poses a significant challenge. It complicates the application of efficient and robust channel estimation techniques that rely on the outer product channel matrix model.

\subsubsection{Subarray-Wise Outer Product Model (SOPM)}
To leverage the simplicity of the outer product-based channel model, which enables straightforward dictionary design \cite{lee2016channel} and tensor decomposition-based estimation \cite{zhou2017low}, the parabolic wavefront model-based LoS channel can be approximated as an outer product of the receive and transmit array steering vectors by neglecting $\boldsymbol{\Lambda}(\eta)$ in (\ref{Hparabolic}). 
However, in XL-MIMO systems, the MIMO advanced Rayleigh distance (MIMO-ARD) \cite{lu2023near}, which quantifies the distance boundary where the outer product approximation remains valid, is typically large. This boundary is defined as $ \frac{4 \Ar \At}{\lambda} $, where $\Ar = (\Nr-1)d$ and $\At = (\Nt-1)d$ represent the aperture size of the receive and transmit arrays, respectively.  For example, in the considered XL-MIMO system, the MIMO-ARD is approximately 81 meters. This implies that it typically covers significant portion of the cell coverage.
Therefore, simply neglecting the coupled phase term severely deteriorates the performance in most scenarios in XL-MIMO systems.

To facilitate simpler signal processing based on an accurate channel model, we instead propose the subarray-wise outer product model (SOPM), where the coupled phase term can be accurately approximated as a sum of terms dependent either on transmit or receive antenna index due to the reduced aperture of the subarrays.
In this model, we approximate the coupled distance term in (\ref{rparabolic2}), by utilizing the first-order bivariate Taylor expansion $xy \approx \widebar{x} \widebar{y} + \widebar{y} (x - \widebar{x}) + \widebar{x}(y - \widebar{y}) $, as 
\begin{align}\label{coupled term approximation}
	\eta \deltarm \deltatn & \approx
	\eta \left( \nuri \nutj + \nutj ( \deltarm - \nuri ) + \nuri ( \deltatn - \nutj ) \right) \nonumber \\
	& = \eta ( \nutj \deltarm + \nuri \deltatn - \nuri \nutj ),
\end{align}
where $ i = \left\lceil \frac{m}{\Nrs} \right\rceil $ and $ j = \left\lceil \frac{n}{\Nts} \right\rceil $ are the subarray indices of the $m$th transmit and the $n$th receive antenna, respectively, and $\nuri =  \left( (2i - 1) \Nrs - \Nr \right) d/2$ and $\nutj =  \left( (2j - 1) \Nts - \Nt \right) d/2$ are the centroid of $i$th receive and $j$th transmit subarray, respectively.
Consequently, substituting (\ref{coupled term approximation}) into (\ref{rparabolic2}), the distance between the $ m $th receive and the $ n $th transmit antenna based on the SOPM can be written as
\begin{align}\label{rSOPM}
	r_{m, n}^{\textrm{SOPM}}
	& = R \!+\! \deltarm^2 \alphar \!-\! \deltatn^2 \alphat \nonumber \\
	& \quad +\! \deltarm ( \varphir \!-\! \eta \nutj ) \!-\! \deltatn ( \varphit \!+\! \eta \nuri ) \!+\! \eta \nuri \nutj. \hspace{-2pt}
\end{align}
Since $r_{m, n}^{\textrm{SOPM}}$ does not contain any coupled terms depending both on $m$ and $n$ for antenna indices confined within a subarray, i.e., $ (i-1) \Nrs < m \leq i \Nrs $ and $ (j-1) \Nts < n \leq j \Nts $, the LoS channel matrix based on the SOPM is expressed as
\begin{align}\label{HSOPM}
	\mathbf{H}\los^{\textrm{SOPM}}
	& =
	\begin{bmatrix}
		\widetilde{\mathbf{H}}^{\textrm{SOPM}}_{\textrm{LoS}, 1, 1}	& \cdots &	\widetilde{\mathbf{H}}^{\textrm{SOPM}}_{\textrm{LoS}, 1, \Kt} \\
		\vdots & \ddots & \vdots \\
		\widetilde{\mathbf{H}}^{\textrm{SOPM}}_{\textrm{LoS}, \Kr, 1}	& \cdots &	\widetilde{\mathbf{H}}^{\textrm{SOPM}}_{\textrm{LoS}, \Kr, \Kt}
	\end{bmatrix},
\end{align}
where the $(i, j)$th submatrix of size $\Nrs \times \Nts$ is defined as
\begin{align}
	\widetilde{\mathbf{H}}^{\textrm{SOPM}}_{\textrm{LoS}, i, j}
	& = \tilde{g}_{i, j} 
	\Sri \ar ( \xirj, \alphar ) 
	(\Stj \at ( \xiti, \alphat ) )^H,
\end{align}
where  $\tilde{g}_{i, j} = g e^{-j \frac{2\pi}{\lambda} \eta \nuri \nutj }$, $\Sri = [\mathbf{0}_{\Nrs \times (i-1) \Nrs}, \I_{\Nrs}, \mathbf{0}_{\Nrs \times (\Kr-i)\Nrs}] \in \mathbb{C}^{\Nrs \times \Nr} $, $\Stj = [\mathbf{0}_{\Nts \times (j-1) \Nts}, \I_{\Nts}, \mathbf{0}_{\Nts \times (\Kt-j)\Nts}]  \in \mathbb{C}^{\Nts \times \Nt} $, $ \xirj = \varphir - \eta \nutj $, and $ \xiti = \varphit + \eta \nuri $.

Now, to determine the region where the SOPM remains valid, we first derive the worst case phase error of the SOPM in the following lemma.
\begin{lemma} \rm
	The maximum phase discrepancy between the USWM in (\ref{HUSWM}) and the SOPM at a communication distance $R$ is given by
	\begin{align}
		\Delta \Phi^{\textrm{SOPM}} & \triangleq \max_{ \thetar, \thetat, \phir } \max_{m, n} \frac{2 \pi}{\lambda} \lvert r_{m, n} - r_{m, n}^{\textrm{SOPM}} \rvert \nonumber \\
		& = \frac{\pi \Ars \Ats}{2R \lambda} + \mathcal{O}(R p^4),
	\end{align}
	where $p = \frac{\Ar + \At}{2R}$, and $ \Ars = (\Nrs - 1)d $ and $ \Ats = (\Nts - 1)d $ represent the aperture of the receive and the transmit subarray, respectively.
\end{lemma}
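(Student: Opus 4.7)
The plan is to split the distance gap by inserting the parabolic intermediate and bound the two pieces separately:
\begin{align*}
r_{m,n} - r_{m,n}^{\textrm{SOPM}} = \bigl( r_{m,n} - r_{m,n}^{\textrm{parabolic}} \bigr) + \bigl( r_{m,n}^{\textrm{parabolic}} - r_{m,n}^{\textrm{SOPM}} \bigr).
\end{align*}
Denote these contributions by $E_{1}$ and $E_{2}$, respectively.

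The SOPM residual $E_{2}$ can be written exactly. Since (\ref{rSOPM}) is obtained from (\ref{rparabolic2}) only through the first-order bivariate Taylor replacement of $xy$ about $(\nuri,\nutj)$ applied to $\eta\,\deltarm\deltatn$, the standard remainder formula yields $E_{2} = -\eta(\deltarm - \nuri)(\deltatn - \nutj)$. Within the $(i,j)$th subarray pair, $|\deltarm - \nuri| \le \Ars/2$ and $|\deltatn - \nutj| \le \Ats/2$, while $|\eta| = |\cos\thetar\cos\thetat|/R \le 1/R$ with equality at $\thetar = \thetat = 0$. Consequently $\max_{m,n,\thetar,\thetat,\phir}|E_{2}| = \Ars \Ats/(4R)$, attained at the boresight configuration.

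For the parabolic residual $E_{1}$, I would expand the exact squared distance as $r_{m,n}^{2}/R^{2} = 1 + 2L + Q$, where $L = u\varphir - v\varphit = \mathcal{O}(p)$ and $Q = u^{2} + v^{2} - 2uv(\varphir\varphit + \cos\thetar\cos\thetat) = \mathcal{O}(p^{2})$, with $u = \deltarm/R$ and $v = \deltatn/R$. Applying $\sqrt{1 + 2L + Q} = 1 + L + Q/2 - L^{2}/2 - LQ/2 - Q^{2}/8 + \mathcal{O}((2L + Q)^{3})$ together with the algebraic identity $Q = Q_{\textrm{p}} + L^{2}$, where $Q_{\textrm{p}}/2$ is precisely the quadratic contribution to $r_{m,n}^{\textrm{parabolic}}/R$ in (\ref{rparabolic}), causes the $L$ and $Q_{\textrm{p}}/2$ terms to reconstruct $r_{m,n}^{\textrm{parabolic}}/R$. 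What remains is $E_{1}/R = -LQ/2 - Q^{2}/8 + \mathcal{O}(p^{3})$. At the boresight angles $\thetar = \thetat = 0$ attaining $\max|E_{2}|$, one has $\varphir = \varphit = 0$, hence $L = 0$ and $Q = (u - v)^{2}$, and the cubic-order piece of the remainder collapses to $Q^{3} = \mathcal{O}(p^{6})$. Therefore $E_{1} = -R(u - v)^{4}/8 + \mathcal{O}(Rp^{6}) = \mathcal{O}(Rp^{4})$, since $|u - v| \le p$.

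The delicate step is to ensure that the joint angular maximiser of $|E_{1}| + |E_{2}|$ does not drift far enough from boresight to inflate the $\mathcal{O}(Rp^{4})$ remainder. Perturbing around $\thetar = \thetat = 0$, the factor $\cos\thetar\cos\thetat = 1 - \mathcal{O}(\theta^{2})$ causes $|E_{2}|$ to decrease by $\mathcal{O}(Rp^{2}\theta^{2})$ (using $\Ars \Ats \lesssim R^{2} p^{2}$), while the reintroduced $L = \mathcal{O}(p\theta)$ allows the cross term $R|LQ|/2 = \mathcal{O}(Rp^{3}\theta)$ in $|E_{1}|$ to reappear. A short balancing argument shows that the angular optimum lies within $\theta = \mathcal{O}(p)$ of boresight and contributes only $\mathcal{O}(Rp^{4})$ to the maximum, which is absorbed into the remainder. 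Multiplying the resulting maximal distance error by $2\pi/\lambda$ and absorbing the $\lambda$-dependence into the constant of the remainder yields $\Delta\Phi^{\textrm{SOPM}} = \pi \Ars \Ats/(2R\lambda) + \mathcal{O}(Rp^{4})$.
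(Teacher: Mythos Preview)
Your argument is correct and takes a genuinely different route from the paper. The paper works directly with $r_{m,n} - r_{m,n}^{\textrm{SOPM}}$: it sets up Cartesian coordinates for the antenna positions, writes $r$ and $r^{\textrm{SOPM}}$ explicitly, and applies the KKT conditions to the constrained maximization of $f = r - r^{\textrm{SOPM}}$ subject to the shifted-antenna distance bound. The stationarity analysis rules out interior extrema and forces the maximizer onto the boundary---outermost antennas, arrays anti-aligned ($\thetat = \thetar + \pi$, $\phir = 0$)---leaving a one-parameter maximization in $\theta = \thetat$. A direct Taylor expansion of $r$ in powers of $p$ then produces $\frac{\Ars\Ats}{4R}\cos^2\theta$ as the leading difference, maximized at $\theta = 0$.

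Your decomposition via the parabolic intermediate is more elementary: the key observation that $E_2$ is \emph{exactly} the bilinear Taylor remainder $-\eta(\deltarm - \nuri)(\deltatn - \nutj)$ delivers the leading term $\frac{\Ars\Ats}{4R}$ by inspection, without any constrained-optimization machinery, and immediately identifies boresight as the $E_2$-maximizer. The paper's KKT route handles the full $r - r^{\textrm{SOPM}}$ at once and so pins down the extremal configuration more systematically but less transparently. One caution on your side: the balancing step invokes $\Ars\Ats \lesssim R^2p^2$ to bound the $|E_2|$ decrease \emph{from above}, whereas the argument actually needs a lower bound (i.e., $\Theta$ rather than $\mathcal{O}$) to conclude $\theta^\star = \mathcal{O}(p)$; this is fine once the implicit constants are allowed to depend on $\Kr,\Kt$, but the phrasing should be tightened. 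The paper's own Taylor step glosses over a cross term of order $Rp^3\sin\theta$ in essentially the same way, so both proofs sit at comparable levels of rigor on the remainder.
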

\begin{proof}
	See Appendix \ref{Proof of Lemma 1}.
\end{proof}

Grounded in the worst case phase error analysis, we introduce the subarray-wise outer product distance (SOPD).
This distance is defined as the threshold at which the largest phase discrepancy between the LoS channel based on the USWM and the SOPM exceeds $ \pi/8 $.
Neglecting the higher order phase error terms, the SOPD is given by
\begin{align}
	R_{\textrm{SOPD}} = \frac{4 \Ars \Ats}{ \lambda}.
\end{align}
Since the aperture of the subarrays can be set small, the SPOD can be minimized by using subarrays with small apertures.

\subsection{NLoS Channel Model}
Unlike the LoS channel, the variation in received power levels across each antenna due to the scatterer, i.e., the power ratio, is used to evaluate the effectiveness of the USWM in the NLoS channel.  For instance, in the considered XL-MIMO system, the upper bound of the UPD for the NLoS channel is approximately 6 meters when the power ratio threshold is set to $\Gamma_{\textrm{th}, \mathrm{NLoS}} = 0.9$. Furthermore, since the parabolic distance of the NLoS channel to/from the scatterer is approximately 1.6 meters—a relatively small value—the channel matrix for each NLoS path can be modeled as the outer product of the receive and transmit near-field steering vectors in (\ref{steering vector r}) and (\ref{steering vector t}) \cite{lu2023near}. Thus, the NLoS channel matrix, based on the Fresnel approximation, $ \mathbf{H}\nlos \in \mathbb{C}^{\Nr \times \Nt}$, can be expressed as
\begin{align}\label{key}
	\mathbf{H}\nlos = \sqrt{\frac{1}{L}} \sum_{l = 1}^{ L } g_l  \ar ( \varrhorl, \beta_{\mathrm{r}, l} ) \at ( \varrhotl, \beta_{\mathrm{t}, l} )^H, 
\end{align}
where $ L $ denotes the number of the NLoS paths, $ g_l $ denotes the complex channel gain of the $ l $th NLoS path.
$ \varrhorl = \cos \rhorl $ and $ \varrhotl = - \cos \rhotl $ represent the linear phase terms, and 
$ \beta_{\mathrm{r}, l} = \frac{1 - \varrhorl^2 }{2R_{\mathrm{r}, l}} $ and 
$ \beta_{\mathrm{t}, l} = - \frac{1 - \varrhotl^2 }{2R_{\mathrm{t}, l}} $ are
the quadratic phase terms of the receive and the transmit array steering vectors where $\rhorl$, $\rhotl$, $R_{\mathrm{r}, l}$, and $R_{\mathrm{t}, l}$ represent the angle of arrival, the angle of departure, the distance between the receiver and the scatterer, the distance between the transmitter and the scatterer of the $l$th path, respectively, as shown in Fig. \ref{fig:ULAs}.
The gain of the NLoS paths satisfy $g_{l} \sim \mathcal{CN}(0, \sigma_{l}^2)$, for $l = 1 ,\dots L$, where $ \sigma_{l}^2 = 1/(1+\kappa)$ and $\kappa$ is the factor representing the power ratio of the LoS and the NLoS paths defined in Section \ref{LoS Channel Model}.

\section{Two-stage on-grid Channel Estimation}\label{Proposed Two-stage Channel Estimation}
In this section, we propose a low-complexity two-stage on-grid channel estimation algorithm. 
In the first stage, the LoS channel component is estimated using the alternating subarray-wise array gain maximization (ASAGM) based on the SOPM, treating the NLoS paths as interference since the channel gain of the LoS path is larger than that of the NLoS path. 
This approach allows to reformulate the high-complexity on-grid parameter search problem into several low-dimensional array gain maximization subproblems.  In the second stage, after subtracting the estimated LoS component, the NLoS channel component is estimated using the sensing matrix refinement-based orthogonal matching pursuit (SMR-OMP). 
This technique mitigates the computational complexity associated with support detection based on a high-dimensional joint dictionary by employing the simultaneous orthogonal matching pursuit (SOMP) algorithm, which facilitates independent support detection for the transmit and receive side dictionaries.

\subsection{LoS Channel Estimation}
Since the channel gain of the LoS path is larger than that of the NLoS path, we first estimate the LoS channel. Treating the NLoS channel component as interference, the estimation problem of the LoS channel parameters can be reduced to an on-grid parameter search problem (e.g. maximum likelihood \cite{lu2023near}).
However, the complexity of solving this type of problem is prohibitively large since it involves parameter estimation of high dimension.
To alleviate this issue, we propose reformulating the high-dimensional parameter estimation problem into iterative receive and transmit array gain maximization subproblems with reduced parameter search dimension.
To this end, we represent the received signal of the $i$th receive subarray from the $t$th transmit subarray $\Y_{i, j} \in \mathbb{C}^{\Mrs \times \Mts}$ as 
\begin{align}\label{SOPM system model}
	\Y_{i, j} 
	& = \widetilde{\W}_{i}^H \widetilde{\mathbf{H}}^{\textrm{SOPM}}_{\textrm{LoS}, i, j} \widetilde{\F}_{j} + \widetilde{\W}_{i}^H \mathbf{N}_{i, j} \nonumber \\
	& = \tilde{g}_{i, j} \widetilde{\W}_{i}^H \Sri \ar ( \xirj, \alphar ) 
	(\Stj \at ( \xiti, \alphat ) )^H \widetilde{\F}_{j} \nonumber \\
	& \quad  + \widetilde{\W}_{i}^H \mathbf{N}_{i, j},
\end{align}
where $ \mathbf{N}_{i, j} = \Sri \mathbf{N} \Stj^H \in \mathbb{C}^{\Nrs \times \Mts} $.

Next, to enable correlation-based estimation in the presence of colored noise, a noise-whitening procedure is first  carried out.
The covariance matrix of the noise of the received signal $ \Y_{i, j} $ is $ \K_{i} = \sigma_{w}^2 \widetilde{\W}_{i}^H \widetilde{\W}_{i} \in \mathbb{C}^{\Mrs \times \Mrs} $.
Using Cholesky factorization, the covariance matrix is decomposed as $ \K_{i} = \sigma_{w}^2 \mathbf{L}_{i} \mathbf{L}_{i}^H $.
Then, the whitened received signal is expressed as
\begin{align}\label{SOPM system model}
	\widebar{\Y}_{i, j} & = \mathbf{L}_{i}^{-1} \Y_{i, j} \nonumber \\ 
	& = \tilde{g}_{i, j}  \mathbf{L}_{i}^{-1} \widetilde{\W}_{i}^H \Sri \ar ( \xirj, \alphar ) 
	(\Stj \at ( \xiti, \alphat ) )^H \widetilde{\F}_{j} \nonumber \\
	& \quad  + \mathbf{L}_{i}^{-1} \widetilde{\W}_{i}^H \mathbf{N}_{i, j} \nonumber \\
	& = \tilde{g}_{i, j}  \tildeari ( \xirj, \alphar ) \tildeatj ( \xiti, \alphat )^H + \widebar{\mathbf{N}}_{i, j},
\end{align}
where $ \tildeari ( \xirj, \alpha ) = \mathbf{L}_{i}^{-1} \widetilde{\W}_i^H \Sri \ar ( \xirj, \alphar ) \in \mathbb{C}^{\Mrs \times 1}$, $ \tildeatj ( \xiti, \alphat ) = \widetilde{\F}_j^H \Stj \at ( \xiti, \alphat ) \in \mathbb{C}^{\Mt \times 1} $, and $\widebar{\mathbf{N}}_{i, j} = \mathbf{L}_{i}^{-1} \widetilde{\W}_{i}^H \mathbf{N}_{i, j} \in \mathbb{C}^{\Mrs \times \Mts} $ is the white Gaussian noise whose covariance matrix is $ \widebar{\K} = \sigma_{w}^2 \mathbf{I}_{\Mrs} $

Then, we define the receive and transmit array gain between the $i$th receive and the $j$th transmit subarrays as 
\begin{align}
	\gammarij( \barxirj, \baralphar ) & = \lvert \barari ( \barxirj, \baralphar )^H \tildeari ( \xirj, \alphar ) \rvert, \\
	\gammatij( \barxiti, \baralphat ) & = \lvert \baratj ( \barxiti, \baralphat )^H \tildeatj ( \xiti, \alphat ) \rvert,
\end{align}
where $ \barari ( \barxirj, \baralphar ) = \frac{ \tildeari ( \barxirj, \baralphar ) }{ \norm{\tildeari ( \barxirj, \baralphar )} } $, and $ \baratj ( \barxiti, \baralphat ) = \frac{ \tildeatj ( \barxiti, \baralphat ) }{ \norm{\tildeatj ( \barxiti, \baralphat )} } $.
Since the array gain is maximized when $ ( \barxirj, \baralphar ) =  ( \xirj, \alphar )$ and $( \barxiti, \baralphat ) = ( \xiti, \alphat )$, we can estimate the parameters by identifying those that maximizes the array gain.
Representing  $ G_{i, j}( \barxirj, \baralphar, \barxiti, \baralphat ) \triangleq \lvert \barari ( \barxirj, \baralphar )^H \widebar{\Y}_{i, j} \baratj ( \barxiti, \baralphat ) \rvert $ by the correlation between the array steering vectors and the received signal of the $i$th receive subarray from the $j$th transmit subarray, the noisy observation of the sum of the array gain can be obtained as
\begin{align} 
	& \sum_{i = 1}^{\Kr} \sum_{j = 1}^{\Kt} 
	 G_{i, j}( \barxirj, \barxiti, \baralphar, \baralphat ) \nonumber \\[-4pt]
	& \quad = \sum_{i = 1}^{\Kr} \sum_{j = 1}^{\Kt} 
	\gammarij( \barxirj, \baralphar ) \gammatij( \barxiti, \baralphat ) + \widetilde{\mathbf{N}}_{i, j},
\end{align}
We then cast the channel estimation problem as maximization of the noisy measurement of the array gain written as
\begin{subequations}
\begin{align} 
	\max_ { \barthetar, \barthetat, \barphir, \barR } \, &
	\sum_{i = 1}^{\Kr} \sum_{j = 1}^{\Kt} 
	G_{i, j}( \barxirj, \barxiti, \baralphar, \baralphat )  \label{SOPM PE P} \\
	\text{s.t. }
	\quad &  \barxirj = \barvarphir - d \bareta \nutj, \label{SOPM PE C1} \\
	\quad &  \barxiti = \barvarphit + d \bareta \nuri. \label{SOPM PE C2}
\end{align}
\end{subequations}
As outlined previously, directly solving this problem is highly complex since it involves searching a set of the parameters of high dimension.
Instead, we adopt an alternating approach in which $ \gammarij( \barxirj, \baralphar ) $ and $ \gammatij( \barxiti, \baralphat ) $ are alternately maximized. 
This approach is more susceptible to noise and multipath interference compared to jointly estimating all parameters; however, it can perform effectively if the estimated parameters gradually enhance array gain in each iteration.
Nonetheless, this method encounters challenges when alternately updating the estimates of the parameters $\{ \varphir, \alphar, \eta \}$ and $\{ \varphit, \alphat, \eta \}$ in (\ref{rSOPM}), as $ \gammarij( \barxirj, \baralphar ) $ and $ \gammatij( \barxiti, \baralphat ) $ are coupled through the parameter $ \bareta $, as shown in (\ref{SOPM PE C1}) and (\ref{SOPM PE C2}).
In addition to this coupling, the computational complexity remains high, given that it involves a three-dimensional parameter estimation problem.
To cope with this, we propose ASAGM method, which follows a two-step approach.

In the first step, the parameters $ \{ \boldsymbol{\xi}_{\mathrm{r}}, \alphar \} $ and $ \{ \boldsymbol{\xi}_{\mathrm{t}}, \alphat \} $ are alternately estimated with the constraints (\ref{SOPM PE C1}) and (\ref{SOPM PE C2}) temporarily removed, where $ \boldsymbol{\xi}_{\mathrm{r}} = \{ \xi_{\mathrm{r}, 1}, \xi_{\mathrm{r}, 2}, \dots, \xi_{\mathrm{r}, \Kt} \} $ and $ \boldsymbol{\xi}_{\mathrm{t}} = \{ \xi_{\mathrm{t}, 1}, \xi_{\mathrm{t}, 2}, \dots, \xi_{\mathrm{t}, \Kr} \} $ serve as auxiliary parameters. 
Specifically, in $\tau$th iteration, the parameters are updated as
\begin{align} 
	\{ \hatXir^{\scriptscriptstyle(\tau)}\!, \hatalphar^{\scriptscriptstyle(\tau)}\! \} & \!=\! \argmax_{ \substack{\barXir \in \boldsymbol{\Q}_{\boldsymbol{\xi}\ur} \\ \baralphar \in \boldsymbol{\Q}_{\alphar}} }
	\sum_{i = 1}^{\Kr} \sum_{j = 1}^{\Kt} 
	G_{i, j}( \barxirj, \baralphar, \hatxiti^{\scriptscriptstyle(\tau\!-\!1)}\!, \hatalphat^{\scriptscriptstyle(\tau\!-\!1)}\! )\!, \hspace{-0.5in} \label{ASAGM P1 r} \\[-3pt]
	\{ \hatXit^{\scriptscriptstyle(\tau)}\!, \hatalphat^{\scriptscriptstyle(\tau)}\! \} & \!=\! \argmax_{ \substack{\barXit \in \boldsymbol{\Q}_{\boldsymbol{\xi}\ut} \\ \baralphat \in \boldsymbol{\Q}_{\alphat}}  }
	\sum_{i = 1}^{\Kr} \sum_{j = 1}^{\Kt} 
	G_{i, j}( \hatxirj^{\scriptscriptstyle(\tau)}, \hatalphar^{\scriptscriptstyle(\tau)}, \barxiti, \baralphat )\!, \label{ASAGM P1 t}
\end{align}
where $\boldsymbol{\Q}_{(\cdot)}$ represents the parameter quantization grid of $(\cdot)$.
The problems (\ref{ASAGM P1 r}) and (\ref{ASAGM P1 t}) are $(\Kr + 1)$ and $(\Kt + 1)$ dimensional parameter estimation problems, respectively.
However, since the elements of $ \boldsymbol{\xi}_{\mathrm{r}} $ and $ \boldsymbol{\xi}_{\mathrm{t}} $ are independent, the problems (\ref{ASAGM P1 r}) and (\ref{ASAGM P1 t}) can be equivalently rewritten as sequential parameter search problems of low dimension, where the elements of $ \boldsymbol{\xi}_{\mathrm{r}} $ and $ \boldsymbol{\xi}_{\mathrm{t}} $ are separately estimated for given $ \alphar $ and $ \alphat $.
Specifically,  representing $ \checkxirj^{\scriptscriptstyle(\tau)} (\baralphar) $ as the best estimate of $ \xirj $ for given $ \baralphar $ in $\tau$th iteration, problem (\ref{ASAGM P1 r}) can be reformulated as a sequential parameter estimation problem as follows.
\begin{align} \label{ASAGM P2 r 1}
	\checkxirj^{\scriptscriptstyle(\tau)} (\baralphar) & = 
	\argmax_{\barxirj \in \boldsymbol{\Q}_{\xirj}}
	\sum_{i = 1}^{\Kr}
	G_{i, j}( \barxirj, \baralphar, \hatxiti^{\scriptscriptstyle(\tau-1)}, \hatalphat^{\scriptscriptstyle(\tau-1)} ),
\end{align}
for $j = 1 , \dots, \Kt$, $\forall \baralphar \in \boldsymbol{\Q}_{\alphar}$ and
\begin{align} \label{ASAGM P2 r 2}
\hatalphar^{\scriptscriptstyle(\tau)} & \!=\! 
\argmax_{\baralphar \in \boldsymbol{\Q}_{\alphar}}
\sum_{i = 1}^{\Kr}
\sum_{j = 1}^{\Kt} 
G_{i, j}( \checkxirj^{\scriptscriptstyle(\tau)} (\baralphar), \baralphar, \hatxiti^{\scriptscriptstyle(\tau-1)}\!, \hatalphat^{\scriptscriptstyle(\tau-1)}\! ), \\[-3pt]
\hatxirj^{\scriptscriptstyle(\tau)} & = \checkxirj^{\scriptscriptstyle(\tau)} (\hatalphar^{\scriptscriptstyle(\tau)}). \label{ASAGM P2 r estimate}
\end{align}
Similarly, representing $ \checkxiti^{\scriptscriptstyle(\tau)} (\baralphat) $ as the best estimate of $ \xiti $ for given $ \baralphat $ in $\tau$th iteration, problem (\ref{ASAGM P1 t}) can be recast as 
\begin{align} \label{ASAGM P2 t 1}
	\checkxiti^{\scriptscriptstyle(\tau)} (\baralphat) & =
	\argmax_{\barxiti \in \boldsymbol{\Q}_{\xiti}}
	\sum_{j = 1}^{\Kt} 
	G_{i, j}( \hatxirj^{\scriptscriptstyle(\tau)}, \hatalphar^{\scriptscriptstyle(\tau)}, \barxiti, \baralphat ),
\end{align}
for $i = 1 , \dots, \Kr$, $\forall \baralphat \in \boldsymbol{\Q}_{\alphat}$ and 
\begin{align} \label{ASAGM P2 t 2}
	\hatalphat^{\scriptscriptstyle(\tau)}&  = 
	\argmax_{\baralphat \in \boldsymbol{\Q}_{\alphat}}
	\sum_{i = 1}^{\Kr}
	\sum_{j = 1}^{\Kt} 
	G_{i, j}( \hatxirj^{\scriptscriptstyle(\tau)}, \hatalphar^{\scriptscriptstyle(\tau)}, \checkxiti^{\scriptscriptstyle(\tau)} (\baralphat), \baralphat ), \\[-3pt]
	\hatxiti^{\scriptscriptstyle(\tau)} & = \checkxiti^{\scriptscriptstyle(\tau)} (\hatalphat^{\scriptscriptstyle(\tau)}). \label{ASAGM P2 t estimate}
\end{align}
The first step is computationally efficient, as (\ref{ASAGM P2 r 1}) and (\ref{ASAGM P2 t 1}) consist of $\Kr$ and $\Kt$ parallel one-dimensional parameter estimation problems with respect to each $\xirj$ and $\xiti$ for all $\baralphar \in \boldsymbol{\Q}_{\alphar}$ and $\baralphat \in \boldsymbol{\Q}_{\alphat}$, followed by one-dimensional parameter estimation problem for $\alphar$ and $\alphat$.
A schematic diagram of the parameter search space is illustrated in Fig. \ref{fig:codebook}.
\begin{figure}[t!]
	\centering
	\includegraphics[width = 3.3in]{./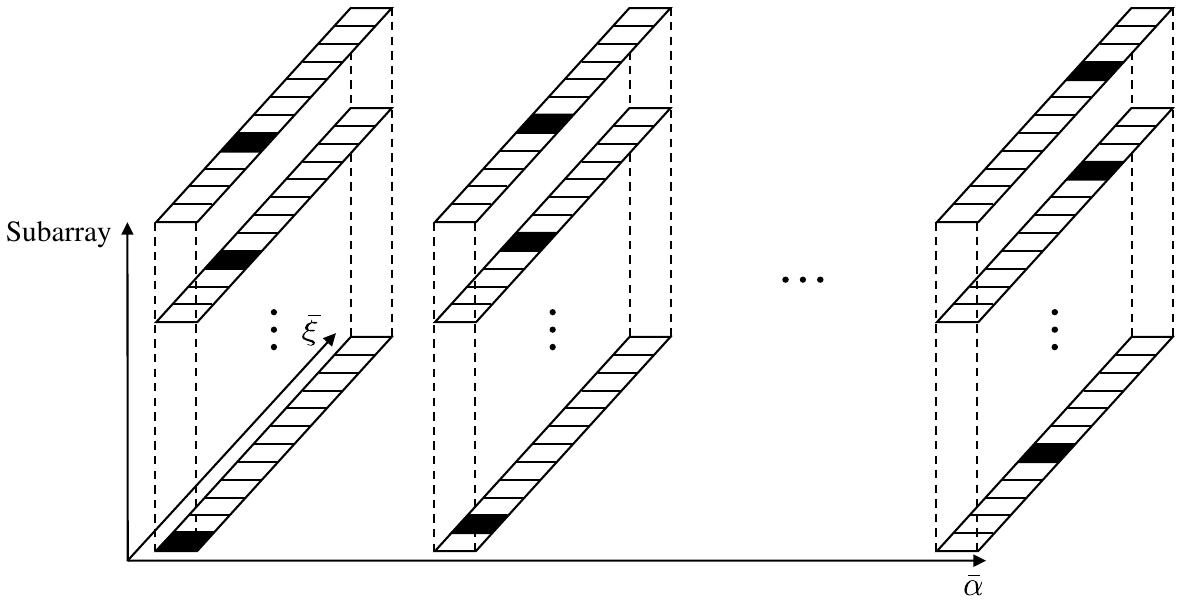}
	\caption{A schematic diagram of the parameter search space.}
	\label{fig:codebook}
	\vspace{-.2in}
\end{figure}
Defining $T_{\textrm{iter}}$ as the number of iterations, the set of the estimated parameters is represented as  $\{ \hat{\boldsymbol{\xi}}_{\mathrm{r}}, \hatalphar, \hat{\boldsymbol{\xi}}_{\mathrm{t}},\hatalphat \} = \{ \hatXir^{\scriptscriptstyle{(T_{\textrm{iter}})}}\!, \hatalphar^{\scriptscriptstyle(T_{\textrm{iter}})}\!, \hatXit^{\scriptscriptstyle{(T_{\textrm{iter}})}}\!, \hatalphat^{\scriptscriptstyle(T_{\textrm{iter}})} \} $.
It is noteworthy that the first step of the ASAGM method converges since the objective value is non-decreasing during each array gain maximization steps (\ref{ASAGM P1 r}) and (\ref{ASAGM P1 t}).

Although the estimates of the auxiliary parameters $\{ \hat{\boldsymbol{\xi}}_{\mathrm{r}}, \hat{\boldsymbol{\xi}}_{\mathrm{t}} \}$ are obtained with low complexity in the first step, the constraints (\ref{SOPM PE C1}) and (\ref{SOPM PE C2}) were relaxed during their estimation.
To incorporate these constraints into the estimation process, we estimate the parameters $\{ \varphir, \varphit, \eta \}$ in the second step by solving a linear regression problem to ensure that the original parameters of interest fit the linear relation of the estimated $\{ \hat{\boldsymbol{\xi}}_{\mathrm{r}}, \hat{\boldsymbol{\xi}}_{\mathrm{t}} \}$.
Specifically, the linear regression problem is formulated by minimizing the sum of the squared error as
\begin{align}\label{SOPM PE LR}
\{ \hatvarphir, \hatvarphit, \hateta \} \nonumber \\[-3pt] 
= \argmin_{ \varphir, \varphit, \eta } & \Biggl( \! \sum_{i=1}^{\Kr} \lvert \hat{\xi}_{\mathrm{t}, i} \!-\! \varphit \!-\! d \eta \nuri \rvert^2 \!+\! \sum_{j=1}^{\Kt} \lvert \hat{\xi}_{\mathrm{r}, j} \!- \!\varphir \!+\! d \eta \nutj \rvert^2 \!\Biggr)\!.
\end{align}
The solution of this problem can be easily obtained, as the squared error is a quadratic function of the parameters.

Finally, based on the estimated parameters $\{ \hatvarphir, \hatalphar, \hatvarphit, \hatalphat, \hateta \} $ and the channel model in (\ref{Hparabolic}), we estimate the channel gain by solving the least squares problem represented as
\begin{align}\label{SOPM PE LS}
	\min_{g} \normF{
		\Y - g \W^H \left(  \ar ( \hatvarphir, \hatalphar ) 
		\at ( \hatvarphit, \hatalphat )^H
		\odot \boldsymbol{\Lambda}(\hateta) \right) \F }^{2}.
\end{align}
Representing the estimated channel gain as $\hat{g}$, the estimated LoS channel matrix is represented as
\begin{align}
	\widehat{\mathbf{H}}\los =
	\hat{g}
	\ar ( \hatvarphir, \hatalphar ) 
	\at ( \hatvarphit, \hatalphat )^H
	\odot \boldsymbol{\Lambda}(\hateta).
\end{align}

The proposed LoS channel estimation is summarized in \textbf{Algorithm \ref{alg:alg1}}.
A detailed analysis of the computational complexity will be provided in Section \ref{Computational Complexity Analysis}.

\begin{algorithm}[t]
	\caption{LoS Channel Estimation Algorithm}\label{alg1} \label{alg:alg1}
	\begin{algorithmic}[1] 
		\renewcommand{\algorithmicrequire}{\textbf{Inputs:}}
		\renewcommand{\algorithmicensure}{\textbf{Initialization:}}
		\Require Whitened received signal $ \widebar{\Y} $, combiner matrix $ \W $, precoder matrix $ \F $, parameter quantization grids $\boldsymbol{\Q}_{\boldsymbol{\xi}\ur}$, $\boldsymbol{\Q}_{\alphar}$, $\boldsymbol{\Q}_{\boldsymbol{\xi}\ut}$, and $\boldsymbol{\Q}_{\alphat}$.
		\For{$\tau \in \{ 1, \dots, T_{\textrm{iter}} \} $}
		\State Update $\checkxirj^{\scriptscriptstyle(\tau)} (\baralphar)$, $j = 1 , \dots, \Kt$, $\forall \baralphar \in \boldsymbol{\Q}_{\alphar}$, via (\ref{ASAGM P2 r 1}).
		\State Update $\hatalphar^{\scriptscriptstyle(\tau)}$ via (\ref{ASAGM P2 r 2}).
		\State Update $\checkxiti^{\scriptscriptstyle(\tau)} (\baralphat)$, $i = 1 , \dots, \Kr$, $\forall \baralphat \in \boldsymbol{\Q}_{\alphat}$, via (\ref{ASAGM P2 t 1}).
		\State Update $\hatalphar^{\scriptscriptstyle(\tau)}$ via (\ref{ASAGM P2 t 2}).
		\EndFor
		\State$\{ \hat{\boldsymbol{\xi}}_{\mathrm{r}},  \hatalphar, \hat{\boldsymbol{\xi}}_{\mathrm{t}},\hatalphat \} = \{ \hatXir^{\scriptscriptstyle{(T_{\textrm{iter}})}},  \hatalphar^{\scriptscriptstyle(T_{\textrm{iter}})}, \hatXit^{\scriptscriptstyle{(T_{\textrm{iter}})}}, \hatalphat^{\scriptscriptstyle(T_{\textrm{iter}})} \} $
		\State Obtain $\{ \hatvarphir, \hatvarphit, \hateta \}$ via (\ref{SOPM PE LR}).
		\State Obtain $\{ \hat{g} \}$ via (\ref{SOPM PE LS}).
		\State $\widehat{\mathbf{H}}\los =
		\hat{g}
		\ar ( \hatvarphir, \hatalphar ) 
		\at ( \hatvarphit, \hatalphat )^H
		\odot \boldsymbol{\Lambda}(\hateta).$
		\Statex \hspace*{-2em} \textbf{Output:} $ \widehat{\mathbf{H}}\los $
	\end{algorithmic} 
\end{algorithm}
%
%
\subsection{NLoS Channel Estimation}
We start by representing the received signal with the estimated LoS component subtracted as
\begin{align}
	\Y\nlos & = \Y - \W^H \widehat{\mathbf{H}}\los \F.
\end{align}
Since the number of scattered paths composing the NLoS channel is usually small in the XL-MIMO systems operating at high frequencies, the NLoS channel can be efficiently estimated via sparse signal recovery method. 
To facilitate this based on the sparse representation of the NLoS channel, we leverage the polar domain receive and transmit side dictionary \cite{cui2022channel} of size $ \Nr \times \QDr $ and $ \Nt \times \QDt $ defined as
\begin{align}
	\Dr & = 
	\bigr[
		\ar( \bar{\varphi}_{\mathrm{r}, 0}, \bar{\alpha}_{\mathrm{r}, 0} ), 
	\dots,
	\ar( \bar{\varphi}_{\mathrm{r}, \QDr - 1} , \bar{\alpha}_{\mathrm{r}, \QDr - 1} )
	\bigr], \\
	\Dt & = 
	\bigr[
		\at( \bar{\varphi}_{\mathrm{t}, 0}, \bar{\alpha}_{\mathrm{t}, 0} ), 
		\dots,
		\at( \bar{\varphi}_{\mathrm{t}, \QDt - 1} , \bar{\alpha}_{\mathrm{t}, \QDt - 1} )
	\bigr],
\end{align}
where $ \QDr $ and $ \QDt $ represent the receive and transmit dictionary quantization levels, respectively.
Then, assuming that the receive and transmit array steering vectors of the scattered paths lie in the quantization grid of the dictionaries, the NLoS channel matrix can be written as 
\begin{align}
	\mathbf{H}\nlos & = \Dr \mathbf{H}\nlosp \Dt^H,
\end{align}
where $ \mathbf{H}\nlosp \in \mathbb{C}^{\QDr \times \QDt}$ denotes the polar domain NLoS channel matrix.
For sparse vector representation, we represent the vectorized NLoS channel as $ \mathbf{h}\nlos = \vect( \mathbf{H}\nlos ) \in \mathbb{C}^{\QDr \QDt \times 1 } $. Then, using the relation $\vect(\A \mathbf{X} \B) = (\B^T \otimes \A) \vect(\mathbf{X})$, the vectorized channel is expressed as 
\begin{align}\label{nlos joint dict}
	\mathbf{h}\nlos  & = ( \Dt^* \otimes \Dr ) \vect(\mathbf{H}\nlosp) \nonumber \\
	& = \boldsymbol{\Psi} \mathbf{h}\nlosp
\end{align}
where $ \boldsymbol{\Psi} = \Dt^* \otimes \Dr \in \mathbb{C}^{\Nr \Nt \times \QDr \QDt } $ and $ \mathbf{h}\nlosp = \vect(\mathbf{H}\nlosp) \in \mathbb{C}^{\QDr \QDt \times 1} $ represents the polar domain joint dictionary and polar domain channel vector.

Next, assuming that the LoS channel component is perfectly subtracted, the received signal for the NLoS channel can be written as
\begin{align}\label{nlos sparse mat}
	\Y\nlos = \W^H \Dr \mathbf{H}\nlosp \Dt^H \F + \W^H \mathbf{N}.
\end{align}
Since the noise is colored with covariance matrix $ \K = \sigma_{w}^2 \W^H \W \in \mathbb{C}^{\Mr \times \Mr} $, we whiten the received signal of the NLoS component based on the Cholesky decomposition $\K = \sigma_{w}^2 \mathbf{L} \mathbf{L}^H$ as
\begin{align}\label{nlos sparse mat whitened}
	\widebar{\Y}\nlos & = \mathbf{L}^{-1} \Y\nlos \nonumber \\
	& = \mathbf{L}^{-1} \W^H \Dr \mathbf{H}\nlosp \Dt^H \F + \widebar{\mathbf{N}},
\end{align}
where $\widebar{\mathbf{N}} = \mathbf{L}^{-1} \W^H \mathbf{N} \in \mathbb{C}^{\Mr \times \Mt}$ is the white Gaussian noise.
For sparse vector representation, we define $ \bar{\mathbf{y}}\nlos = \vect( \widebar{\mathbf{Y}}\nlos ) \in \mathbb{C}^{\Mr \Mt \times 1} $, which can be represented as 
\begin{align}\label{nlos sparse vect}
	\bar{\y}\nlos & = (\F^T \otimes \mathbf{L}^{-1} \W^H) \boldsymbol{\Psi} \mathbf{h}\nlosp + \bar{\mathbf{n}} \nonumber \\
	& = \boldsymbol{\Phi} \boldsymbol{\Psi} \mathbf{h}\nlosp + \bar{\mathbf{n}} \nonumber \\
	& = \boldsymbol{\Upsilon} \mathbf{h}\nlosp + \bar{\mathbf{n}},
\end{align}
where $ \boldsymbol{\Phi} = \F^T \otimes \mathbf{L}^{-1} \W^H \in \mathbb{C}^{ \Mr \Mt\times \Nr \Nt } $, $ \boldsymbol{\Upsilon} = \boldsymbol{\Phi} \boldsymbol{\Psi} \in \mathbb{C}^{ \Mr \Mt\times \QDr \QDt } $ and $ \bar{\mathbf{n}} = \vect(\widebar{\mathbf{N}}) \in \mathbb{C}^{\Mr \Mt \times 1} $ denote measurement matrix, sensing matrix, and noise vector, respectively. 
Since $\mathbf{h}\nlosp$ is a sparse vector, it can be efficiently estimated via sparse signal recovery methods.
For example, orthogonal matching pursuit (OMP) can be employed to estimate the NLoS channel based on (\ref{nlos sparse vect}) with a computational complexity of $\mathcal{O}( \widehat{L} \QDr \QDt \Mr \Mt)$, where $\widehat{L}$ is the estimated number of the NLoS paths. 
This complexity grows linearly with the dimension of the sensing matrix $\QDr \QDt$.
However, in the near-field XL-MIMO systems, $\QDr$ and $\QDt$ are often very large matrices since geometric parameters are quantized in both angular and distance domain to generate a polar domain dictionary.

To mitigate the prohibitive computational complexity arising from the large quantization level of the joint dictionary $\QDr \QDt$, we propose sensing matrix refinement-based OMP (SMR-OMP) algorithm.
In this algorithm, we obtain a refined sensing matrix with reduced dimension through low-complexity coarse support detection in the first step.
In the second step, we employ the OMP algorithm to accurately estimate the sparse channel vector with low computational complexity proportional to the dimension of the reduced sensing matrix.
For low-complexity coarse support detection in the first step, we exploit the row and column sparse structure of the transformed channel, where only the receive or transmit side is in the polar domain.
Specifically, we utilize simultaneous OMP (SOMP) algorithm \cite{tropp2005simultaneous} illustrated in \textbf{Algorithm \ref{alg:alg2}}.
The SOMP algorithm efficiently detects structured support by comparing the sum of coherence across consecutive elements instead of single element as described in line 2 of \textbf{Algorithm \ref{alg:alg2}}.

\begin{algorithm}[t]
	\caption{SOMP}\label{alg:alg2}
	\begin{algorithmic}[1]
		\renewcommand{\algorithmicrequire}{\textbf{Inputs:}}
		\renewcommand{\algorithmicensure}{\textbf{Initialization:}}
		\Require Received signal $ \Y $, sensing matrix $\boldsymbol{\Upsilon}$, sparsity $ L $.
		\Ensure Support set $ \boldsymbol{\Omega} = \emptyset$, residual matrix $ \mathbf{R} = \Y $.
		\For {$ l \in \{1 ,\dots, L \} $}
		\State Detect new support: $ n^{\star} = \argmax_{n} \norm{ [\boldsymbol{\Upsilon}^H \mathbf{R}]_{n, :} }_{2} $.
		\State Update support set: $ \boldsymbol{\Omega} = \boldsymbol{\Omega} \cup \{ n^{\star} \}$.
		\State Update coefficient: $ \widehat{\C} = ( [ \boldsymbol{\Upsilon} ]_{:, \boldsymbol{\Omega}} )^{\dagger} \Y $.
		\State Update residual: $ \mathbf{R} = \Y - [ \boldsymbol{\Upsilon} ]_{:, \boldsymbol{\Omega}} \widehat{\C} $.
		\EndFor
		\Statex \hspace*{-2em} \textbf{Output:} $\boldsymbol{\Omega}$.
	\end{algorithmic} 
\end{algorithm}

To reformulate the NLoS channel support detection problem as a sparse signal recovery problem with multiple observations based on one-dimensional dictionaries, we recast (\ref{nlos sparse mat whitened}) as 
\begin{align}
	\widebar{\Y}\nlos = \boldsymbol{\Upsilon}_{\mathrm{r}} \C_{\mathrm{r}} + \widebar{\mathbf{N}},
\end{align}
where $ \boldsymbol{\Upsilon}_{\mathrm{r}} = \mathbf{L}^{-1} \W^H \Dr \in \mathbb{C}^{\Mr \times \QDr }$, and $\C_{\mathrm{r}} = \mathbf{H}\nlosp \Dt^H \F\in \mathbb{C}^{\QDr \times \Mt}$ is a transformed sparse channel matrix with $L$-nonzero rows.
Due to the row-sparsity of $\C_{\mathrm{r}}$, we can efficiently detect the support set of $\Dr$ using the SOMP algorithm.
Similarly, we can detect the support set of $\Dt$ based on the formulation 
\begin{align}
	\widebar{\Y}^H\nlos = \boldsymbol{\Upsilon}_{\mathrm{t}} \C_{\mathrm{t}} +  \widebar{\mathbf{N}}^H,
\end{align}
where $\boldsymbol{\Upsilon}_{\mathrm{t}} = \F^H \Dt \in \mathbb{C}^{ \Mt \times \QDt }$ and $\C_{\mathrm{t}} = \mathbf{H}\nlosp^H \Dr^H \W (\mathbf{L}^{-1})^H \in \mathbb{C}^{\QDt \times \Mr}$ is a transformed sparse channel matrix with $L$-nonzero rows.

Denoting $ \hatOmegar $ and $ \hatOmegat $ by the detected support sets of $\Dr$ and $\Dt$, respectively, we represent the refined dictionaries for receive and transmit array steering vectors as 
$\hatDr = [\Dr]_{:, \hatOmegar} \in \mathbb{C}^{\Nr \times \widehat{L}_{\mathrm{r}} }$ and $\hatDt = [\Dt]_{:, \hatOmegat} \in \mathbb{C}^{\Nt \times \widehat{L}_{\mathrm{t}} }$,
where $\widehat{L}_{\mathrm{r}} $ and $\widehat{L}_{\mathrm{t}}$ denote the estimated number of NLoS paths from the scatterer to the receiver and from the transmitter to the scatterer, respectively.
Based on this, we represent the refined joint dictionary similar to (\ref{nlos sparse vect}) as 
\begin{align}
	\widehat{\boldsymbol{\Psi}} = \hatDt^* \otimes \hatDr \in \mathbb{C}^{\Nr \Nt \times \widehat{L}_{\mathrm{r}} \widehat{L}_{\mathrm{t}} }.
\end{align}
Then, assuming that the support set is correctly detected in the coarse estimation step, (\ref{nlos sparse vect}) is rewritten as
\begin{align}\label{key}
	\bar{\y}\nlos
	& = \boldsymbol{\Phi} \widehat{\boldsymbol{\Psi}} \mathbf{h}\nlospp + \bar{\mathbf{n}} \nonumber \\
	& = \widehat{\boldsymbol{\Upsilon}}  \mathbf{h}\nlospp + \bar{\mathbf{n}},
\end{align}
where $ \widehat{\boldsymbol{\Upsilon}} = \boldsymbol{\Phi} \widehat{\boldsymbol{\Psi}} \in \mathbb{C}^{\Mr \Mt \times \widehat{L}_{\mathrm{r}} \widehat{L}_{\mathrm{t}} } $ is the refined sensing matrix and $ \mathbf{h}\nlospp $ is the channel vector in the refined polar domain.
With the refined sensing matrix of low dimension $\widehat{\boldsymbol{\Upsilon}}$, the channel matrix can be recovered via OMP algorithm with a computational complexity of $\mathcal{O}( \max(\widehat{L}_{\mathrm{r}}, \widehat{L}_{\mathrm{t}}) \widehat{L}_{\mathrm{r}} \widehat{L}_{\mathrm{t}} \Mr \Mt)$, mainly dominated by the correlation calculation step.
Since $ \widehat{L}_{\mathrm{r}} \widehat{L}_{\mathrm{t}} \ll \QDr \QDt$, the computational complexity of the proposed SMR-OMP is greatly reduced compared to the OMP algorithm with joint dictionary.
Finally, the estimated NLoS channel matrix can be given by
\begin{align}
	 \widehat{\mathbf{H}}\nlos = \invect( [ \widehat{\boldsymbol{\Psi}} ]_{:, \boldsymbol{\widehat{\Omega}}} \widehat{\mathbf{h}}\nlospp ),
\end{align}
where $ \boldsymbol{\widehat{\Omega}} $ and $\widehat{\mathbf{h}}\nlospp $ represent the detected support set of the refined sensing matrix $\widehat{\boldsymbol{\Psi}}$ and the estimated channel vector in the refined polar domain via OMP algorithm.
The SMR-OMP-based NLoS channel estimation algorithm is summarized in \textbf{Algorithm 3}.
Finally, the estimated channel is obtained as 
\begin{align}
	\widehat{\mathbf{H}} & = \widehat{\mathbf{H}}\los + \widehat{\mathbf{H}}\nlos.
\end{align}

The flowchart of the proposed LoS and NLoS channel estimation algorithm is presented in Fig. \ref{fig:flowchart}.
\begin{figure*}[t!]
\centering
\includegraphics[width = 6.5in]{./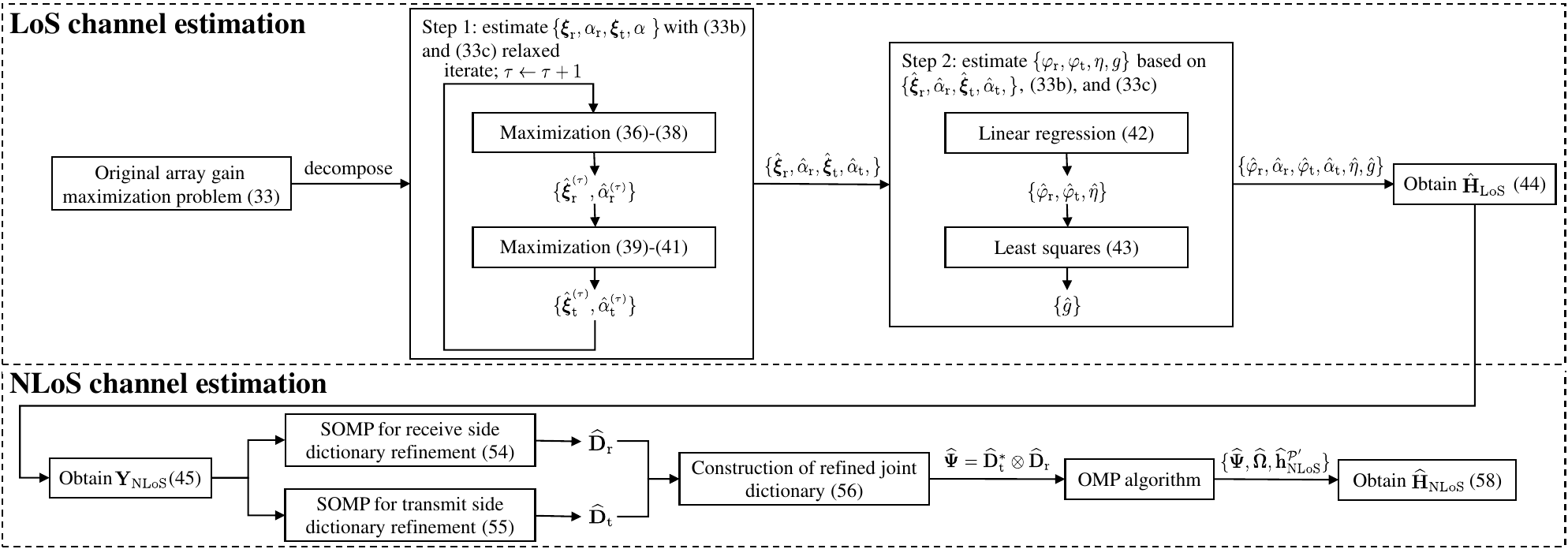}
\caption{Flowchart of the proposed two-stage channel estimation algorithm.}
\label{fig:flowchart}
\vspace{-.2in}
\end{figure*}
A comprehensive analysis of the computational complexity will be presented in Section \ref{Computational Complexity Analysis}.

\begin{algorithm}[t]
\caption{SMR-OMP based NLoS Channel Estimation}\label{alg:alg3}
\begin{algorithmic}[1]
	\renewcommand{\algorithmicrequire}{\textbf{Inputs:}}
	\renewcommand{\algorithmicensure}{\textbf{Initialization:}}
	\Require Received signal for NLoS component $ \widebar{\Y}\nlos $, sensing matrices $\boldsymbol{\Upsilon}_{\mathrm{r}}$ and $\boldsymbol{\Upsilon}_{\mathrm{t}}$, estimated number of the NLoS paths $\widehat{L}_{\mathrm{r}}$ and $\widehat{L}_{\mathrm{t}}$, and measurement matrix $\boldsymbol{\Phi}$.
	\Ensure Support set of the refined dictionary $\widehat{\boldsymbol{\Omega}} = \emptyset$, residual vector $ \mathbf{r} = \bar{\y}\nlos $.
	\State Detect support set of $ \Dr $ and $ \Dt $: 
	\Statex  $\hatOmegar = \textrm{SOMP}(\widebar{\Y}\nlos, \boldsymbol{\Upsilon}_{\mathrm{r}}, \widehat{L}_{\mathrm{r}}) $
	\Statex  $\hatOmegat = \textrm{SOMP}(\widebar{\Y}\nlos^H, \boldsymbol{\Upsilon}_{\mathrm{t}}, \widehat{L}_{\mathrm{t}}) $
	\State Obtain refined joint dictionary: $\widehat{\boldsymbol{\Psi}} = \hatDt^* \otimes  \hatDr$.
	\State Obtain refined sensing matrix: $ \widehat{\boldsymbol{\Upsilon}} = \boldsymbol{\Phi} \widehat{\boldsymbol{\Psi}} $.
	\For {$ l \in \{1 ,\dots, \max(\widehat{L}_{\mathrm{r}}, \widehat{L}_{\mathrm{t}}) \} $}
	\State Detect new support: $ n^{\star} = \argmax_{n} \lvert [\widehat{\boldsymbol{\Upsilon}}^H \mathbf{r}]_{n} \rvert $.
	\State Update support set: $ \widehat{\boldsymbol{\Omega}} = \widehat{\boldsymbol{\Omega}} \cup \{ n^{\star} \}$.
	\State Update coefficient: $ \widehat{\mathbf{h}}\nlospp  = ( [ \widehat{\boldsymbol{\Upsilon}} ]_{:, \widehat{\boldsymbol{\Omega}}} )^{\dagger} \bar{\y}\nlos $.
	\State Update residual: $ \mathbf{r} = \bar{\y}\nlos - [ \widehat{\boldsymbol{\Upsilon}} ]_{:, \widehat{\boldsymbol{\Omega}}} \widehat{\mathbf{h}}\nlospp $.
	\EndFor
	\State $\widehat{\mathbf{H}}\nlos = \invect( [ \widehat{\boldsymbol{\Psi}} ]_{:, \boldsymbol{\widehat{\Omega}}} \widehat{\mathbf{h}}\nlospp ).$
	\Statex \hspace*{-2em} \textbf{Output:} $ \widehat{\boldsymbol{\Psi}} $, $ \widehat{\boldsymbol{\Omega}} $, $ \widehat{\mathbf{h}}\nlospp $.
\end{algorithmic} 
\end{algorithm}

\subsection{Computational Complexity Analysis} \label{Computational Complexity Analysis}
We provide the analysis of the computational complexity of the LoS and NLoS channel estimation of the proposed scheme as well as the benchmark scheme \cite{lu2023near}. Additionally, we evaluate the computational complexity of the joint LoS and NLoS channel estimation approaches outlined in \cite{cui2022channel, shi2024double}.
\subsubsection{LoS channel estimation}
We represent the quantization level of the parameter $(\cdot)$ as $Q_{(\cdot)}$. Additionally, we assume that $ Q_{\xi_{\mathrm{r}}} \triangleq Q_{\xi_{\mathrm{r}, 1}} = \cdots = Q_{\xi_{\mathrm{r}, \Kt}}$ and $ Q_{\xi_{\mathrm{t}}} \triangleq Q_{\xi_{\mathrm{t}, 1}} = \cdots = Q_{\xi_{\mathrm{t}, \Kr}}$.
The computational complexity of the proposed LoS channel estimation algorithm is mainly dominated by steps 1-6 in \textbf{Algorithm \ref{alg:alg1}}, which involve iterative updates of $\{ \hatXir^{\scriptscriptstyle(\tau)}, \hatalphar^{\scriptscriptstyle(\tau)} \}$ in (\ref{ASAGM P2 r 1})-(\ref{ASAGM P2 r estimate}) and $\{ \hatXit^{\scriptscriptstyle(\tau)}, \hatalphat^{\scriptscriptstyle(\tau)} \}$ in (\ref{ASAGM P2 t 1})-(\ref{ASAGM P2 t estimate}). 
The computational complexities of these steps are $ \mathcal{O} ( T_{\textrm{iter}} \Kt \Mr Q_{\xi_{\mathrm{r}}} Q_{\alphar} ) $ and $ \mathcal{O} ( T_{\textrm{iter}} \Kr \Mt Q_{\xi_{\mathrm{t}}} Q_{\alphat} ) $, which are mainly due to (\ref{ASAGM P2 r 1}) and (\ref{ASAGM P2 t 1}), respectively. 
Steps 8 and 9 have complexities of $\mathcal{O}(\Kr + \Kt)$ and $\mathcal{O}(\Mr \Mt \min(\Mr, \Mt) )$, respectively.
Since the parameter quantization levels are usually much larger than the number of the training beams $\Mr$ and $\Mt$, the overall complexity of the proposed LoS estimation algorithm is $\mathcal{O}(  T_{\textrm{iter}} (\Kt \Mr Q_{\xi_{\mathrm{r}}} Q_{\alphar} + \Kr \Mt Q_{\xi_{\mathrm{t}}} Q_{\alphat}) )$.
In contrast, the complexity of the on-grid LoS estimation algorithm based on the geometric parameters \cite{lu2023near} is $\mathcal{O}( \Mr \Mt Q_{\thetar} Q_{\thetat} Q_{\phir} Q_{R})$.
In addition, the complexity of the gradient descent based refinement step is $\mathcal{O}( T_{\textrm{grad}} \Mr \Mt \Nr \Nt )$, where $T_{\textrm{grad}}$ denotes the number of the iteration of the gradient descent step.
\subsubsection{NLoS channel estimation}
The complexity of the proposed NLoS channel estimation algorithm can be obtained by examining the SMR-OMP algorithm in \textbf{Algorithm \ref{alg:alg3}}.
For simplicity, we assume that $\widehat{L} \triangleq \widehat{L}_{\mathrm{r}} = \widehat{L}_{\mathrm{t}} $.
In step 1, the complexity of the SOMP algorithm is dominated by support detection and coefficient update steps, shown in steps 2 and 4 in \textbf{Algorithm \ref{alg:alg2}} which are $ \mathcal{O}( \widehat{L} \Mr \Mt (  \QDr + \QDt ) )$ and $ \mathcal{O}( \widehat{L}^2  (\Mr  + \Mt ) ) $, respectively.
Since the quantization level is often much larger than the number of the paths, i.e., $ \widehat{L} \ll \QDr, \QDt $ the complexity is $ \mathcal{O}( \widehat{L} \Mr \Mt (  \QDr + \QDt ) )$.
The complexities of correlation calculation and coefficient update in steps 6 and 8 are $ \mathcal{O}( \widehat{L}^{3} \Mr \Mt  ) $ and $ \mathcal{O}( \widehat{L}^{2} \Mr \Mt  ) $, respectively.
Therefore, the complexity of the proposed SMR-OMP is $ \mathcal{O}( \widehat{L} \Mr \Mt (  \QDr + \QDt ) )$.
In contrast, the computational complexity of the conventional OMP algorithm with joint dictionary (\ref{nlos joint dict}) is $\mathcal{O}( \widehat{L} \Mr \Mt \QDr \QDt )$ \cite{cui2022channel, lu2023near}, which is dominated by the correlation calculation step.
\subsubsection{Joint LoS and NLoS channel estimation}
The computational complexity of the conventional OMP algorithm for joint LoS and NLoS channel estimation \cite{cui2022channel} is $\mathcal{O}( (\widehat{L}+1) \Mr \Mt \QDr \QDt )$.
Additionally, the complexity of the 3-stage multiple measurement vector unified OMP scheme \cite{shi2024double} for joint LoS and NLoS channel estimation is dominated by the third stage where the $Q_{\eta}$ matrices of size $\Nr \times \Nt$ are transformed to polar domain signal. The computational complexity of this step is $\mathcal{O}( (\widehat{L}+1) Q_{\eta} \QDr \QDt \min(\Nr, \Nt) )$. 

Overall, the computational complexity of the proposed method grows at a much slower rate with respect to the quantization level. Consequently, it can accommodate significantly higher quantization levels, allowing the proposed algorithm to achieve substantial performance improvements. This is particularly important, as estimation accuracy is highly dependent on the parameter quantization level. A summary of the computational complexities for the proposed and benchmark schemes is provided in Table \ref{table:complexity}.

\begin{table}[t]
	\centering
	\caption{Computational Complexity Comparison} \label{table:complexity}
	\begin{tabular}{|ll|l|}
		\hline
		\multicolumn{2}{|l|}{Estimation scheme}                       & Computational complexity  \\ \hline
		\multicolumn{1}{|l|}{\multirow{2}{*}{LoS}}&Proposed&$\mathcal{O}(  T_{\textrm{iter}} (\Kt \Mr Q_{\xi_{\mathrm{r}}} Q_{\alphar} + \Kr \Mt Q_{\xi_{\mathrm{t}}} Q_{\alphat}) )$  \\ \cline{2-3} 
		\multicolumn{1}{|l|}{}                   & \cite{lu2023near}  & $\mathcal{O}( \Mr \Mt (Q_{\thetar} Q_{\thetat} Q_{\phir} Q_{R} + T_{\textrm{grad}} \Nr \Nt ) ))$  \\ \hline
		\multicolumn{1}{|l|}{\multirow{2}{*}{NLoS}} & Proposed  & $\mathcal{O}( \widehat{L} \Mr \Mt (  \QDr + \QDt ) )$  \\ \cline{2-3} 
		\multicolumn{1}{|l|}{}                   & \cite{lu2023near} & $\mathcal{O}( \widehat{L} \Mr \Mt \QDr \QDt )$ \\ \hline
		\multicolumn{1}{|l|}{Joint}              & \cite{cui2022channel} & $\mathcal{O}( (\widehat{L}+1) \Mr \Mt \QDr \QDt )$ \\ \hline
		\multicolumn{1}{|l|}{Joint}              & \cite{shi2024double} & $\mathcal{O}( (\widehat{L}+1) Q_{\eta} \QDr \QDt \min(\Nr, \Nt) )$ \\ \hline
	\end{tabular}
	\vspace{-.2in}
\end{table}

\section{Simulation Results}\label{Simulation Results}
In this section, we evaluate the performance of the proposed two-stage channel estimation algorithm for the near-field partially-connected XL-MIMO with the ASAGM-based LoS channel estimation and the SMR-OMP-based NLoS channel estimation algorithm (``ASAGM + SMR-OMP'').
We compare the proposed method with the existing  maximum likelihood-based LoS parameter estimation and NLoS estimation via OMP with the joint near-field dictionary defined in (\ref{nlos joint dict}) (``PE + OMP'') \cite{lu2023near}, 3-stage multiple measurement vector unified OMP (``3S-MMV-UOMP'') \cite{shi2024double}, OMP with the joint near-field dictionary-based channel estimation algorithm \cite{cui2022channel}, and Genie-aided least square (``Genie-aided LS'') method where the perfect channel parameter knowledge is available.
The performance evaluation is based on the normalized mean square error (NMSE), defined as $ \text{NMSE} = \mathbb{E} \left[ \frac{  \norm{\mathbf{\hat{H}} - \mathbf{H}}_F^2 }{\norm{\mathbf{H}}_F^2} \right] $, where $\mathbb{E}(\cdot)$ represents the expectation operator. 

In \cite{lu2023near}, only the scenario where the azimuth of the receive antenna array $ \phir $ is fixed at $0$ is considered, resulting in a reduced computational complexity of the LoS channel estimation algorithm. 
In contrast, the scenario considered in this paper allows $ \phir $ to be an arbitrary value, making the computational complexity of the LoS channel estimation for the PE + OMP algorithm computationally infeasibly high.
Therefore, for comparison with the PE + OMP algorithm under feasible computation, we employ a partially Genie-aided parameter estimation (``partially Genie-aided PE'') method where the neighborhood of the true parameter values in the parameter quantization grid is known a priori.

We consider a partially-connected hybrid XL-MIMO system where both the transmitter and receiver are equipped with ULA, each having $\Nr = \Nt = 128$ antennas and $\Kr =4$ and $\Kt = 2$ RF chains.
The carrier frequency is $f_c = 60$ GHz, corresponding to a wavelength $\lambda = 0.005$ meters. 
In this scenario, the SOPD is given by $\frac{4 \Ars \Ats}{\lambda} = 9.5$ meters, and the MIMO-ARD is given by $\frac{4 A_{\mathrm{r}} A_{\mathrm{t}}}{\lambda} = 80.6$ meters.
The angle parameters of the ULAs, $\thetat$, $\thetar$, and $\phir$ are uniformly sampled from $[-60^{\circ}, 60^{\circ}]$. 
Additionally, the minimum and maximum allowable distances between the transmitter and receiver are set to 10 meters and 200 meters, respectively.
We define the SNR as $1/\sigma_{w}^2$. 
The number of NLoS paths is $L = 3$, and the  power ratio factor of the LoS and the NLoS paths is $\kappa = 4$, indicating that the LoS channel gain satisfies $|g|^2 = 4/5$,  while the NLoS channel gains follow $g_{l} \sim \mathcal{CN}(0, 1/15)$, for $l = 1, \dots, L$. 
Importantly, while the channel estimation problems are formulated based on the approximated models, the channel used in the simulation is generated using the NUSWM (\ref{HNUSWM}).

The parameter quantization levels of the proposed ASAGM is set as $ Q_{\xi_{\mathrm{r}}} = Q_{\xi_{\mathrm{t}}} = 640 $ and $ Q_{\alphar} = Q_{\alphat} = 7 $. 
For the PE algorithm described in \cite{lu2023near}, the levels are $ Q_{\thetar} = Q_{\thetat} = Q_{\phir} = 196$ and $ Q_{R} = 256$.
In the partially Genie-aided PE algorithm, we limit the search grid to 5 neighboring points around the grid point closest to the true parameter in the parameter quantization grid.
The quantization levels of the transmit and receive polar domain dictionaries are set as $\QDr = \QDt = 1792 $, respectively.
Note that even for large values of $Q_{\xi_{\mathrm{r}}}$ and $Q_{\xi_{\mathrm{t}}}$, the dimensionality of the search grid in the proposed method, which is proportional to $Q_{\xi_{\mathrm{r}}} Q_{\alphar}$ and $Q_{\xi_{\mathrm{t}}} Q_{\alphat}$ are much smaller than that of the benchmark schemes, such as $Q_{\thetar} Q_{\thetat} Q_{\phir} Q_{R}$ in PE + OMP, $Q_{\eta} \QDr \QDt$ in 3S-MMV-UOMP, or $\QDr \QDt$ in OMP, as detailed in Section \ref{Computational Complexity Analysis}. Consequently, the computational complexity of the proposed scheme is considerably lower.

\begin{figure}[t]
	\centering
	\includegraphics[width = 2.8in]{./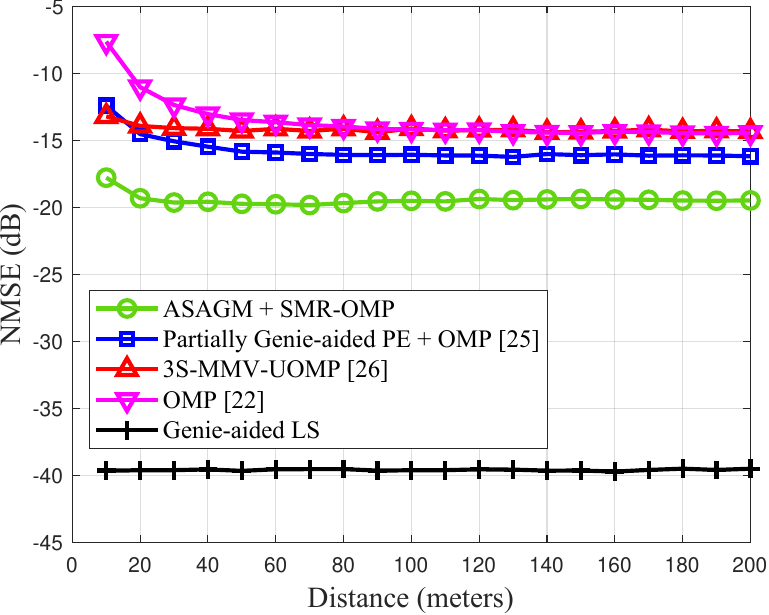}
	\caption{NMSE versus the distance between the transmitter and the receiver.} 
	\label{fig:sim_vs_distance}
	\vspace{-.3in}
\end{figure}
Fig. \ref{fig:sim_vs_distance} illustrates the NMSE performance versus the distance between the transmitter and the receiver where the SNR is $10$ dB and the number of receive and transmit training beams is $\Mr = \Mt = 64$. 
The proposed ASAGM + SMR-OMP method achieves superior NMSE performance compared to other three considered methods.
Notably, the NMSE of the proposed algorithm remains consistently the lowest across all distances evaluated, except for the Genie-aided LS method. 
This is because accurate estimation with the partially Genie-aided PE + OMP algorithm requires higher quantization levels for the LoS channel parameters \( G_{\thetar} \), \( G_{\thetat} \), \( G_{\phir} \), and \( G_{R} \) as the communication distance decreases, due to the increased resolution needed for the near-field channel \cite{ding2023resolution}. 
However, these levels are constrained by the computational complexity of the PE + OMP algorithm, which grows quartically  (i.e. the fourth order) with the parameter quantization level. 
Similarly, the quantization levels $Q_{\eta} \QDr \QDt$ and $\QDr \QDt$ for the 3S-MMV-UOMP and OMP method grow quintically (i.e. the fifth order) and quartically with respect to angular and distance domain samples, respectively. These levels are also constrained by the computational complexity.
On the other hand, the ASAGM method can accommodate much higher parameter quantization levels since its computational complexity grows quadratically with the parameter quantization level.  
Consequently, it guarantees higher estimation performance regardless of the communication distance due to the high resolution of the LoS parameter search grid.
At a distance of 10 meters, the performance of the proposed ASAGM + SMR-OMP method is degraded due to the errors in the SOPM because this distance is smaller than the UPD, which is 12 meters for the given XL-MIMO configuration, as verified in Section \ref{LoS Channel Model}. 
However, the performance loss due to modeling error is negligible over most communication distances that exceed the UPD.

\begin{figure}[t!]
	\centering
	\subfigure[]
	{\includegraphics[width=2.8in]{./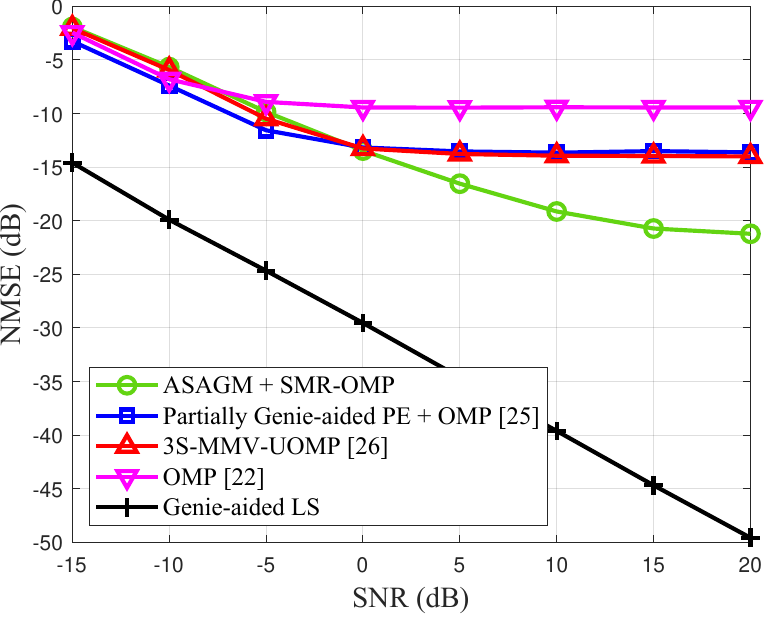}}
	\subfigure[]
	{\includegraphics[width=2.8in]{./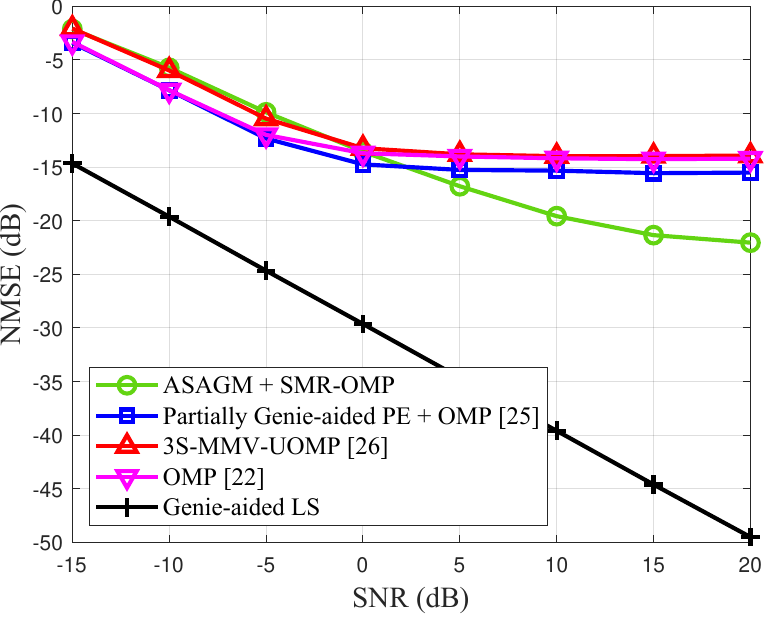}}
	\caption{NMSE versus SNR with the distance between the transmitter and the receiver (a) $R \in [10, 20]$ meters, (b) $R \in [90, 100]$ meters.}
	\label{fig:sim_vs_snr}
	\vspace{-.2in}
\end{figure}

%
Fig. \ref{fig:sim_vs_snr} illustrates the NMSE performance versus SNR when the number of receive and transmit training beams is $\Mr = \Mt = 64$. 
In Fig. \ref{fig:sim_vs_snr} (a) and (b), the distance between the transmitter and the receiver are randomly sampled from $[10, 20]$ meters and $[90, 100]$ meters, respectively.
In both figures, the proposed scheme performs slightly worse than existing benchmark schemes in the low SNR regime (below $0$ dB).
This is because the proposed ASAGM method cannot exploit the SNR gain arising from jointly correlating both receive and transmit array steering vectors, which is useful in the low SNR regime. 
Similarly, the proposed SMR-OMP method cannot leverage this SNR gain in the support detection step since receive and transmit side dictionaries are separately correlated.
However, in the intermediate and high SNR regime (above $0$ dB), the proposed ASAGM + SMR-OMP method outperforms the benchmark schemes. 
This improvement is due to the fact that the parameters constituting transmit or receive array steering vectors can be accurately estimated without relying on the high array gain from joint receive and transmit array gains.
Consequently, the performance is primarily determined by the parameter quantization level, which the proposed ASAGM method can accommodate to a much greater extent with low computational complexity.
However, the performance gap between the proposed method and the Genie-aided LS increases because the proposed scheme is an on-grid method, which inherently has quantization errors.
In addition, the proposed ASAGM + SMR-OMP method performs well consistently regardless of the distance between the transmitter and receiver, while the partially Genie-aided PE + OMP and OMP methods both perform worse when the distance is small at all SNR levels due to their insufficient quantization levels and the increased resolution needed for the near-field channel, as observed in Fig. \ref{fig:sim_vs_distance}.

\begin{figure}[t!]
	\centering
	\subfigure[]
	{\includegraphics[width=2.8in]{./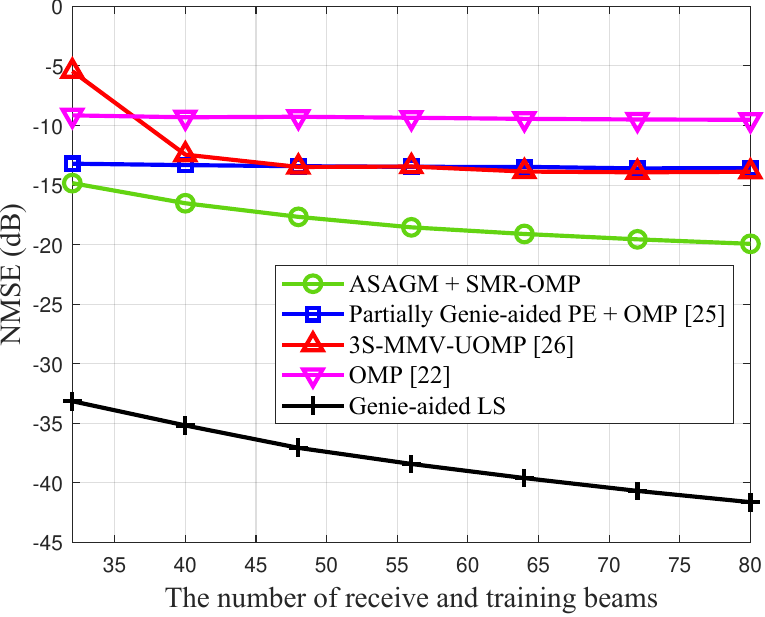}}
	\subfigure[]
	{\includegraphics[width=2.8in]{./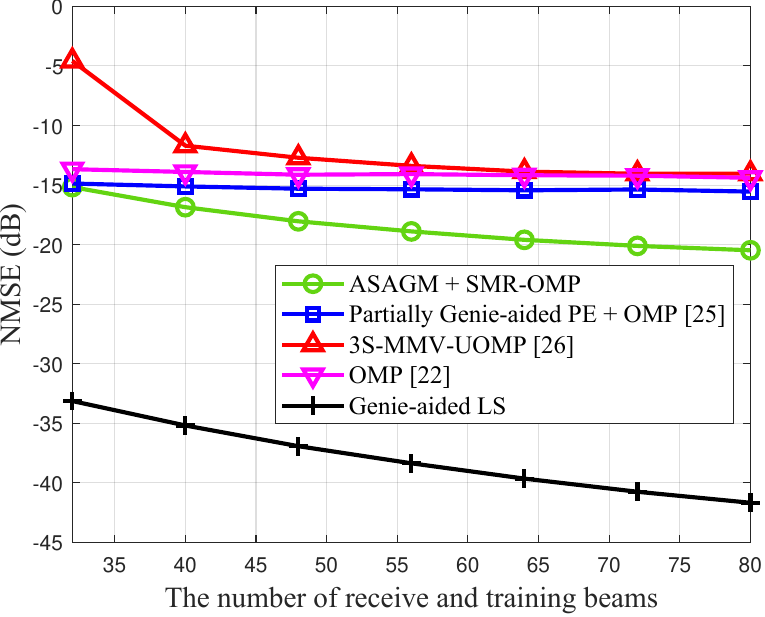}}
	\caption{NMSE versus the number of the receive and transmit training beams ($\Mr = \Mt$) with the distance between the transmitter and receiver (a) $R \in [10, 20]$ meters, (b) $R \in [90, 100]$ meters.}
	\label{fig:sim_vs_pilot}
	\vspace{-0.5cm}
\end{figure}
Fig. \ref{fig:sim_vs_pilot} depicts the NMSE performance versus the number of the receive and transmit training beams where the SNR is $10$ dB.
In Fig. \ref{fig:sim_vs_pilot} (a) and (b), the distance between the transmitter and the receiver are randomly sampled from $[10, 20]$ meters and $[90, 100]$ meters, respectively.
The NMSE of  all  the schemes presented in Fig. \ref{fig:sim_vs_pilot} decreases as the number of training beams increases.
Notably, our proposed ASAGM + SMR-OMP algorithm and the Genie-aided LS algorithm show a significant reduction in NMSE with an increasing number of training beams. 
For ASAGM, this improvement is attributed to the fine resolution of the parameter search grid.
In contrast, the existing methods exhibit minimal NMSE reduction with an increasing number of training beams due to their performance being limited by insufficient quantization levels.

\section{Conclusions}\label{Conclusions}
We addressed the challenge of near-field channel estimation in partially connected XL-MIMO systems. We formulated the near-field channel estimation problem and focused on two primary challenges: accurately estimating the LoS and NLoS components while maintaining low computational complexity. To achieve this, we introduced the SOPM-based LoS channel model, which simplifies complex phase terms into more manageable components with high accuracy. Building on this model, we developed a two-stage channel estimation algorithm: ASAGM for low-complexity LoS channel estimation and SMR-OMP for efficient NLoS channel estimation. Our simulation results demonstrate that the proposed method significantly outperforms existing near-field XL-MIMO channel estimation techniques, particularly in the intermediate and high SNR regime, and in scenarios with arbitrary array placements.  Our future research will focus on exploring channel estimation techniques for systems incorporating extremely large-scale metamaterial-based architectures, such as dynamic metasurface antennas and stacked intelligent metasurfaces.

\appendices
\section{Proof of Lemma 1}\label{Proof of Lemma 1}
To determine the antenna configuration that results in the maximum phase error, we define the Cartesian coordinates for the potential locations of the receive and transmit antenna elements as
\begin{align}
	& (x\ur, y\ur, z\ur) \!=\! (\delta\ur \!\cos{\thetar}, \delta\ur \!\sin{\thetar} \!\sin{\phir}, R \!+\! \delta\ur \!\sin{\thetar} \!\cos{\phir})\!, \hspace{-3pt} \label{coordinate r appendix}  \\
	& (x\ut, y\ut, z\ut) \!=\! (\delta\ut \cos{\thetat}, 0, \delta\ut \sin{\thetat})\!,\label{coordinate t appendix}
\end{align}
where $0 \leq \delta\ut \leq \frac{A\ut}{2}$,  $0 \leq \delta\ur \leq \frac{A\ur}{2}$, and $0 \leq \thetat, \thetar, \phir < 2 \pi$.
For simplicity, we define the relative displacement of the $z$-coordinates as $\Delta_{z} = (z\ur - R) - z\ut$, which satisfies $ \bigl[ (x\ur - x\ut)^2 + y\ur^2 + \Delta_{z}^2 \bigr]^{1/2} \leq \frac{A\ur + A\ut}{2} $.
Then, the distance between the receive and transmit antennas according to the SWM in (\ref{rNUSWM}) and the SOPM in (\ref{rSOPM}) become
\begin{align}
	r & = \bigl[ (x\ur \!-\! x\ut)^2 \!+\! y\ur^2 \!+\! (R \!+\! \Delta_{z})^2 \bigr]^{1/2}, \label{r USWM appendix} \\
	r^{\textrm{SOPM}} & = R + \Delta_{z} \nonumber \label{r SOPM appendix} \\
	& \quad + \! \frac{1}{2R} \Bigl[ \! (x\ur \!-\! x\ut)^2 \!+\! y\ur^2 \!+\! \frac{2 x\ur x\ut}{\delta\ur \delta\ut} (\delta\ur \!-\! \nu\ur) (\delta\ut \!-\! \nu\ut) \! \Bigr]\!,\hspace{-2pt}
\end{align}
where $\nu\ur$ and $\nu\ut$ represent the centroid of the subarrays to which the considered receive and transmit antennas belong.
Next, we define the distance difference function and the constraint function as
\begin{align}
	f(x\ur, x\ut, y\ur, \Delta_{z}) & \triangleq r - r^{\textrm{SOPM}}, \\[-4pt]
	g(x\ur, x\ut, y\ur, \Delta_{z}) & \triangleq (x\ur \!-\! x\ut)^2 + y\ur^2 + \Delta_{z}^2 -\! \left( \! \frac{A\ur \!+\! A\ut}{2} \!\right)^{\!2}\!\!. \hspace{-2pt}
\end{align}
Then, to evaluate the worst case phase error, we solve the phase error maximization problem 
\begin{subequations} \label{problem appendix}
	\begin{align} 
		\max_ { x\ur, x\ut, y\ur, \Delta_{z} } \, &
		\lvert f(x\ur, x\ut, y\ur, \Delta_{z}) \rvert  \\
		\text{s.t. }
		\quad & g(x\ur, x\ut, y\ur, \Delta_{z}) \leq 0.
	\end{align}
\end{subequations}
To address the problem (\ref{problem appendix}), we examine the KKT conditions for the both maximization and minimization of $f(x\ur, x\ut, y\ur, \Delta_{z})$, similar to the approach in \cite{do2023parabolic}.
First, the KKT conditions for the maximization of $f(x\ur, x\ut, y\ur, \Delta_{z})$ are given by
\begin{align}
	\nabla f(x\ur, x\ut, y\ur, \Delta_{z}) + \mu \nabla g(x\ur, x\ut, y\ur, \Delta_{z}) & = \mathbf{0}_{4 \times 1}, \label{KKT 1} \\
	g(x\ur, x\ut, y\ur, \Delta_{z}) & \leq 0, \\
	\mu & \geq 0, \\
	\mu g(x\ur, x\ut, y\ur, \Delta_{z}) & = 0.
\end{align}
Assuming $\mu = 0$, it follows from (\ref{KKT 1}) that $\nabla f(x\ur, x\ut, y\ur, \Delta_{z}) = \mathbf{0}_{4 \times 1}$.
The partial derivative of the difference function with respect to $\Delta_{z}$ is given by
\begin{align}\label{grad1}
& \frac{\partial}{\partial \Delta_{z}}f(x\ur, x\ut, y\ur, \Delta_{z}) \!=\! \frac{R + \Delta_{z}}{ \sqrt{(x\ur \!-\! x\ut)^2 + y\ur^2 + (R \!+\! \Delta_{z})^2} } \!-\! 1.
\end{align}
Equating (\ref{grad1}) to zero gives $x\ur = x\ut$ and $y\ur = 0$.
Letting $x \triangleq x\ur = x\ut$, the partial derivative of the difference function with respect to $x$ is
\begin{align} \label{grad2}
	\frac{\partial}{\partial x}f(x\ur, x\ut, y\ur, \Delta_{z}) & =  \frac{2 x}{R \delta\ur \delta\ut} (\delta\ur -\nu\ur) (\delta\ut - \nu\ut) .
\end{align}
Equating (\ref{grad2}) to zero yields $x = 0$ or $(\delta\ur -\nu\ur) (\delta\ut - \nu\ut) = 0$.
Next, substituting $ x\ur = x\ut = y\ur = 0 $ or $ x\ur = x\ut$, $y\ur = 0$, and $(\delta\ur -\nu\ur) (\delta\ut - \nu\ut) = 0$ into (\ref{r USWM appendix}) and (\ref{r SOPM appendix}), we obtain $ f(x\ur, x\ut, y\ur, \Delta_{z}) = 0$, which is clearly not the global maximum.
Hence, the global maximum occurs when $g(x\ur, x\ut, y\ur, \Delta_{z}) = 0$.
Similarly for the minimization of $f(x\ur, x\ut, y\ur, \Delta_{z})$, the global minimum also occurs when $g(x\ur, x\ut, y\ur, \Delta_{z}) = 0$.
This implies that the necessary condition for maximization of $|f(x\ur, x\ut, y\ur, \Delta_{z})|$ is $g(x\ur, x\ut, y\ur, \Delta_{z}) = 0$, where the shifted receive antenna coordinate $ (x\ur, y\ur, z\ur - R) $ and $ (x\ut, y\ut, z\ut) $ are symmetric with respect to $(0, 0, 0)$, and placed as far from the origin as possible. 
This corresponds to $\delta\ur = \frac{A\ur}{2} $, $\delta\ut = \frac{A\ut}{2} $, $\thetat = \thetar + \pi$, and $\phir = 0$. Next, since the optimization variables that satisfies the necessary condition for optimality can be parameterized by $\theta \triangleq \thetat = \thetar + \pi$, the optimal solution is obtained by determining the value of $\theta$ that maximizes $|f(x\ur, x\ut, y\ur, \Delta_{z})|$.
To this end, the coordinates that satisfies the necessary condition for optimality of (\ref{problem appendix}) can be expressed using (\ref{coordinate r appendix}) and (\ref{coordinate t appendix}) as
\begin{align}
	& (x\ur^{\star}, x\ut^{\star}, y\ur^{\star}, \Delta_{z}^{\star}) \!=\! \left(\!-\frac{A\ur}{2} \!\cos{\theta}, \frac{A\ut}{2} \!\cos{\theta}, 0, -\frac{A\ur \!+\! A\ut}{2} \!\sin\theta \!\right)\!\!.
\end{align}
Then, (\ref{r USWM appendix}) and (\ref{r SOPM appendix}) can be rewritten as
\begin{align}
	r & = R \bigl( 1 - 2 p \sin{\theta}  + p^2 \bigr)^{1/2}, \label{r USWM appendix 2} \\[-3pt]
	r^{\textrm{SOPM}} & = R \bigl( 1 - p \sin{\theta}  + \frac{p^2-q^2}{2} \cos^2{\theta} \bigr), \label{r SOPM appendix 2}
\end{align}
where $p = \frac{\Ar + \At}{2R}$ and $q = \sqrt{\frac{ (\Ar - 2 \nu\ur) (\At - 2 \nu\ut) }{2R^2}}$.
Subsequently, to determine $\theta$ that maximizes the phase error in a tractable way, we rewrite (\ref{r USWM appendix 2}) in a power series form based on the Taylor series expansion as
\begin{align} \label{r USWM appendix 3}
	r = R \left(1 - p \sin \theta + \frac{p^2 \cos^2 \theta}{2} \right) + \mathcal{O}(R p^4) .
\end{align}
Hence, from (\ref{r SOPM appendix 2}) and (\ref{r USWM appendix 3}), we obtain the phase discrepancy 
\begin{align}\label{phase discrepancy appendix}
	\lvert r - r^{\textrm{SOPM}} \rvert & = \frac{R q^2 \cos^2 \theta}{2} + \mathcal{O}(R p^4) \nonumber \\
	& = \frac{(\Ar - 2 \nu\ur)(\At - 2 \nu\ut) \cos^2 \theta}{4R} + \mathcal{O}(R p^4) \nonumber \\
	& \overset{(a)}{=} \frac{(\frac{\Nr}{\Kr} - 1) (\frac{\Nt}{\Kt} - 1) d^2 \cos^2 \theta }{4R} + \mathcal{O}(R p^4) \nonumber \\
	& = \frac{\Ars \Ats \cos^2 \theta}{4R}+ \mathcal{O}(R p^4),
\end{align}
where (a) holds because $\nu\ur = \frac{1}{2}(1 - \frac{1}{\Kr})Nd$ and $\nu\ut = \frac{1}{2}(1 - \frac{1}{\Kt})Nd$, as derived from (\ref{coupled term approximation}). This is due to the fact that the receive and transmit antenna locations, which solve problem (\ref{problem appendix}), are positioned in the subarray furthest from the origin.
Additionally, $ \Ars = (\Nrs - 1)d $ and $ \Ats = (\Nts - 1)d $ represent the aperture of the receive and the transmit subarray, respectively.
Finally, by maximizing (\ref{phase discrepancy appendix}) over $\theta$, the worst case phase error is obtained as
\begin{align}
	\max_{\theta} \frac{2 \pi}{\lambda} \lvert r - r^{\textrm{SOPM}} \rvert & = \frac{\pi \Ars \Ats}{2R \lambda}+ \mathcal{O}(R p^4).
\end{align}

%
%

\ifCLASSOPTIONcaptionsoff
\newpage
\fi

\bibliographystyle{IEEEtran}
\bibliography{IEEEabrv,references}

\end{document}